\documentclass[11pt]{article}
\usepackage[utf8]{inputenc}

\usepackage{graphicx}
\usepackage{xcolor}
\usepackage{amsmath,amsthm,amssymb}
\usepackage[margin=1in]{geometry}
\usepackage[pdftex,colorlinks=true,linkcolor=blue,citecolor=blue,urlcolor=black]{hyperref}
\usepackage{comment}
\setlength{\parskip}{3pt}
\usepackage[normalem]{ulem}
\usepackage{cite}
\usepackage{bm}
\usepackage{tcolorbox}
\usepackage{algorithm}
\usepackage{algorithmic}

\usepackage{multirow}

\def\ba{\begin{array}}
\def\ea{\end{array}}

\DeclareMathOperator*{\argmin}{arg\,min}

\def\0{{\bf 0}}
\def\a{{\bf a}}
\def\b{{\bf b}}

\def\L{{\mathcal L}}
\def\K{{\mathcal K}}
\def\e{{\bf e}}

\def\x{{\bf x}}

\newcommand{\ket}[1]{| #1 \rangle}
\newcommand{\bra}[1]{\langle #1|}

\newcommand{\be}{\begin{equation}}
\newcommand{\ee}{\end{equation}}
\newcommand{\bea}{\begin{eqnarray}}
\newcommand{\eea}{\end{eqnarray}}
\newcommand{\bes}{\begin{equation*}}
\newcommand{\ees}{\end{equation*}}
\newcommand{\beas}{\begin{eqnarray*}}
\newcommand{\eeas}{\end{eqnarray*}}

\makeatletter
\newtheorem*{rep@theorem}{\rep@title}
\newcommand{\newreptheorem}[2]{%
\newenvironment{rep#1}[1]{%
 \def\rep@title{#2 \ref{##1} (restated)}%
 \begin{rep@theorem}}%
 {\end{rep@theorem}}}
\makeatother

\allowdisplaybreaks

\newtheorem{thm}{Theorem}
\newtheorem*{thm*}{Theorem}

\newtheorem{lem}[thm]{Lemma}
\newtheorem*{lem*}{Lemma}

\newtheorem{prop}[thm]{Proposition}
\newtheorem{defn}[thm]{Definition}

\newreptheorem{thm}{Theorem}
\newreptheorem{lem}{Lemma}
\newreptheorem{cor}{Corollary}
\newtheorem{prob}{Problem}

\makeatletter
\newenvironment{breakablealgorithm}
  {
   \begin{center}
     \refstepcounter{algorithm}
     \hrule height1pt depth0pt \kern3pt
     \renewcommand{\caption}[2][\relax]{
       {\raggedright\textbf{\ALG@name~\thealgorithm} ##2\par}%
       \ifx\relax##1\relax 
         \addcontentsline{loa}{algorithm}{\protect\numberline{\thealgorithm}##2}%
       \else 
         \addcontentsline{loa}{algorithm}{\protect\numberline{\thealgorithm}##1}%
       \fi
       \kern3pt\hrule\kern3pt
     }
  }{
     \kern3pt\hrule\relax%
   \end{center}
  }
\makeatother

\usepackage{pgfplots}

\usepackage{authblk}
\usepackage[title]{appendix}


\title{ Quantum speedup of leverage score sampling and its application }

\author{Changpeng Shao\thanks{changpeng.shao@bristol.ac.uk}}
\affil{School of Mathematics, University of Bristol, UK}

\date{\today}

\begin{document}

\maketitle

\begin{abstract}

Leverage score sampling is crucial to the design of randomized algorithms for large-scale matrix problems, while the computation of leverage scores is a bottleneck of many applications. In this paper, we propose a quantum algorithm to accelerate this useful method. The speedup is at least quadratic and could be exponential for well-conditioned matrices. We also prove some quantum lower bounds, which suggest that our quantum algorithm is close to optimal.
As an application, we propose a new quantum algorithm for rigid regression problems with vector solution outputs. It achieves polynomial speedups over the best classical algorithm known. In this process, we give an improved randomized algorithm for rigid regression.

\vspace{.2cm}

{\bf Key words:} quantum algorithm; randomized algorithm; rigid regression; leverage score sampling.
\end{abstract}

\section{Introduction}

Randomized algorithms for large-scale matrix problems (e.g., linear regressions and low-rank approximations) have received great attention in recent years \cite{mahoney2011randomized,woodruff2014sketching}. Sketching and sampling are basic tools. Among all the sketching and sampling techniques, (statistical) leverage score sampling plays a crucial role in many applications, e.g., see \cite{drineas2006sampling,chepurko2022near,kacham2022sketching,clarkson2017low,chowdhury2018iterative,mahoney2011randomized,sarlos2006improved,mahoney2009cur,papailiopoulos2014provable,drineas2008relative,drineas2012fast,drineas2010effective,drineas2011faster}. The best random sampling algorithms use these scores to construct an importance sampling distribution to sample with respect to \cite{mahoney2011randomized}.

Leverage scores measure the extent of the correlation between singular vectors of a matrix and the standard basis. They can be computed as the squared norm of the rows of the matrix containing the top left (or right) singular vectors. 
The leverage score sampling technique aims to reduce a large-scale matrix problem to a small-scale one by sampling certain rows (or columns) from the input matrix according to the distribution defined by the leverage scores. Moreover, from the solution of the small-scale problem, we can recover a high accurate solution of the original problem.
Classically, to perform leverage score sampling we need to compute all the leverage scores first. This turns out to be a bottleneck in many applications.
The best classical algorithm known \cite{clarkson2017low} for approximating all leverage scores has complexity $\widetilde{O}({\rm nnz}(A)+r^3)$, where ${\rm nnz}(A)$ is the number of nonzero entries of $A$ and $r$ is the rank of $A$. This is best possible for classical algorithms.

In this paper, we use the techniques of quantum linear algebra to accelerate the leverage score sampling. As an application, we propose a new quantum algorithm for rigid regressions with vector solution outputs.

\subsection{Main result}

Let $A$ be an $n\times d$ matrix of rank $r$ with singular value decomposition $A=UDV^T$, where $U$ is $n\times r$ consisting of the left singular vectors, $D$ is $r\times r$ consisting of the nonzero singular values, and $V$ is $d\times r$ consisting of the right singular vectors. 
The row leverage scores $\{\L_{R,1},\cdots,\L_{R,n}\}$ are defined as the squared norm of the rows of $U$. The distribution is defined by ${\rm Prob}(i) = \L_{R,i}/r$ for any $i\in\{1,\ldots,n\}$. We can also define column leverage scores according to $V$. 

To apply leverage score sampling on a quantum computer, one option is to prepare a quantum state corresponding to the leverage scores, which is defined as follows:
\be
\ket{\L_R} := \frac{1}{\sqrt{r}} \sum_{j=1}^n \sqrt{\L_{R,j}} \, \ket{j}.
\ee
Sampling according to leverage scores is equivalent to measuring this state in the computational basis. 
To prepare this state, we show that there is no need to compute any leverage score in advance, which can save a lot of computational time. Moreover, if we are interested in the value of some leverage scores, we can use the amplitude estimation technique \cite{brassard2002quantum}. 

In quantum computing, when dealing with matrix operations, we usually need an approach to encode the input matrix into a quantum circuit. A natural way is to construct a unitary with an efficient quantum circuit implementation such that the top-left corner is the input matrix. This is known as block-encoding \cite{chakraborty2019power, gilyen2019quantum}, see Definition \ref{defn:Block-encoding} for a rigorous statement.
Based on quantum singular value transformation \cite{gilyen2019quantum}, in this paper, we propose a quantum algorithm for preparing the state $\ket{\L_R}$.

\begin{thm}[Informal version of Theorem \ref{main theorem}]
\label{thm:intro 4}
Let $A$ be an $n\times d$ matrix of rank $r$. Assume that an $(\alpha,a,\epsilon)$ block-encoding of $A$ is constructed in cost $O(T)$, and the minimal nonzero singular value of $A$ is $\sigma_r$.
Then there is a quantum algorithm that prepares the state $\ket{\L_R}$ in cost $\widetilde{O}\left((T\alpha/\sigma_r) \sqrt{\min(n,d)/r} \right)$.
\end{thm}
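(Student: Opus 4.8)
The plan is to identify the row leverage scores with the squared row norms of $W:=UV^T$ — the matrix obtained from $A$ by resetting every nonzero singular value to $1$ — to build a block-encoding of $W$ by quantum singular value transformation (QSVT), and then to read those row norms out of a post-selected ``matrix state''. The starting point is purely linear-algebraic: since $V$ has orthonormal columns, $\|e_i^T UV^T\|^2 = e_i^T UV^TVU^T e_i = \|e_i^T U\|^2 = \L_{R,i}$, so $\L_{R,i}=\|W_{i,:}\|^2$; summing over $i$ gives $\|W\|_F^2 = \tr(VU^TUV^T)=\tr(I_r)=r$. Equivalently $WW^T = UU^T=:\Pi$ is the orthogonal projector onto $\operatorname{col}(A)$, so $\L_{R,i}=\Pi_{ii}$. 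Also $W=A(A^TA)^{+1/2}$ is obtained from $A$ purely by transforming its singular values, which is exactly what makes it reachable from a block-encoding of $A$.

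Next I would construct a block-encoding of $W$. The assumed $(\alpha,a,\epsilon)$ block-encoding encodes $A/\alpha$, whose nonzero singular values lie in $[\sigma_r/\alpha,1]$, and $W$ is the result of applying the constant function $1$ to those singular values. I would therefore invoke QSVT \cite{gilyen2019quantum} with an odd polynomial $P$ of degree $\widetilde O(\alpha/\sigma_r)$ that $\epsilon'$-approximates $\sgn$ on $[-1,-\sigma_r/\alpha]\cup[\sigma_r/\alpha,1]$ (hence $P\approx 1$ on the relevant range) and satisfies $|P|\le 1$ on $[-1,1]$; crucially, $P$ is constrained only away from $(-\sigma_r/\alpha,\sigma_r/\alpha)$, since $A$ has no singular values there — this is why the degree scales with $\alpha/\sigma_r$ rather than with the reciprocal of some perturbed smallest singular value. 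This gives a $(1,a+O(1),\epsilon'')$ block-encoding of $W$ in cost $O((\deg P)\cdot T)=\widetilde O(T\alpha/\sigma_r)$, where $\epsilon''$ is polynomial in $\epsilon'$ and in $(\deg P)\cdot\epsilon$; taking $\epsilon,\epsilon'$ inverse-polynomially small keeps $\deg P=\widetilde O(\alpha/\sigma_r)$.

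Finally I would extract the row norms. Assume $d\le n$ (otherwise run the symmetric argument with the $d\times n$ matrix $W^T$, which is where the $\min(n,d)$ comes from). Apply the block-encoding of $W$ to the first half of the maximally entangled state $\tfrac1{\sqrt d}\sum_{k=1}^d\ket k\ket k$ — the entanglement matters, because applying $W$ to a plain uniform superposition would yield row \emph{sums}, not row \emph{norms}. On the all-zero branch of the block-encoding ancilla this produces, up to normalization, $\ket\phi:=\tfrac1{\sqrt r}\sum_{i,k}W_{ik}\ket i\ket k$, a branch of weight $\|W\|_F^2/d = r/d$; amplitude amplification then prepares $\ket\phi$ with $O(\sqrt{d/r})$ calls to the block-encoding of $W$ and its inverse, for total cost $\widetilde O\!\big((T\alpha/\sigma_r)\sqrt{d/r}\big)$. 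Tracing out the second register of $\ket\phi$ leaves $\tfrac1r WW^T=\tfrac1r\Pi$, whose $i$-th diagonal entry is $\L_{R,i}/r$, so measuring the first register samples exactly from the leverage-score distribution; this state serves as $\ket{\L_R}$ for the downstream uses (computational-basis sampling, and amplitude estimation of individual scores).

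I expect the main work to be the error analysis, not the construction. One must show that the degree-$\widetilde O(\alpha/\sigma_r)$ polynomial approximation $W'\approx W$, together with the imperfect block-encoding of $A$, yields a state whose first-register measurement distribution is within the target total-variation distance of $\{\L_{R,i}/r\}$ — which requires tracking how operator-norm error in the block-encoding of $W$ propagates through the unnormalized post-selected vector and through amplitude amplification, and checking that $\|W'\|_F^2 = r(1\pm o(1))$ so that the normalization and the $O(\sqrt{d/r})$ amplification overhead are as claimed (this last point is clean because $W'=UP(D/\alpha)V^T$ retains the rank-$r$ structure, so $\|W'\|_F^2=\sum_{i\le r}P(\sigma_i/\alpha)^2$). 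Balancing $\epsilon$, $\epsilon'$, the QSVT degree, and the number of amplification rounds so that everything lands within $\widetilde O\!\big((T\alpha/\sigma_r)\sqrt{\min(n,d)/r}\big)$ is the bookkeeping the formal proof has to carry out.
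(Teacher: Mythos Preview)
Your proposal is correct and follows essentially the same route as the paper: build a block-encoding of $W=UV^T$ via QSVT with an odd sign-approximating polynomial of degree $\widetilde O(\alpha/\sigma_r)$, apply it to (one register of) the maximally entangled state over the smaller dimension, and amplitude-amplify the success branch at cost $O(\sqrt{\min(n,d)/r})$. The paper's proof differs only cosmetically (it phrases the reduction to the smaller dimension via the identity $\sum_j \ket{j}\otimes VU^T\ket{j}=\sum_j UV^T\ket{j}\otimes\ket{j}$ and carries out the error bookkeeping you anticipate in equations (\ref{choice of epsilon 1})--(\ref{choice of epsilon 2})).
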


In the above theorem, $\K:=T\alpha/\sigma_r$ can be viewed the cost of encoding a matrix into a quantum computer. When $\K$ is small, the quantum algorithm presented in Theorem \ref{thm:intro 4} achieves at least a quadratic speedup over classical algorithms \cite{clarkson2017low}. Furthermore, if the matrix has a full rank, the speedup can be exponential. In Proposition \ref{thm:lower bound} below, we will show that the dependence on $\sqrt{\min(n,d)/r}$ is tight. If we restrict ourselves in the framework of block-encoding, then the dependence on $T\alpha/\sigma_r$ is also tight. For convenience, we summarize the quantum/classical upper and lower bounds for leverage score sampling in Table \ref{table for LSS}.

\setlength{\arrayrulewidth}{0.3mm}
{\renewcommand
\arraystretch{1.5}
\begin{table}[h]
\centering
\begin{tabular}{|c|c|c|c|c|} 
 \hline
   & {\bf Upper bound} & {\bf Lower bound} \\ \hline
 {\bf Classical} & $O(nd+r^3)$ \cite{clarkson2017low} & $\Omega(n+d)$ (folklore) \\ \hline
 {\bf Quantum} & $\widetilde{O}(\K \sqrt{\min(n,d)/r})$ [Thm. \ref{main theorem}] & $\Omega(\K+ \sqrt{\min(n,d)/r})$ [Prop. \ref{thm:lower bound}] \\  \hline
\end{tabular}
\caption{A comparison of quantum/classical algorithms for leverage score sampling. The input matrix has size $n\times d$ and rank $r$.}
\label{table for LSS}
\end{table}
}

We want to emphasize that a big difference between  classical and quantum algorithms for leverage score sampling is that classically we first approximate all the leverage scores and then do the sampling, while quantumly we do the sampling first and then approximate the leverage scores of interest. This discrepancy can make a big difference in applications because we are usually more concerned about a small portion of the largest leverage scores.

As an application of Theorem \ref{thm:intro 4} and classical randomized algorithms, we propose a new quantum algorithm for solving the rigid (regularised) regression problems
\be
\label{intro:rigid-regression-new}
\argmin_{\x} \quad  Z(\x):=\|A \x - \b\|^2 + \lambda^2 \|\x\|^2
\ee
with the goal of outputting an approximate vector solution.
Here $\lambda> 0$ is the regularization parameter. 
Rigid regression is a useful method for a variety of problems in many different areas like machine learning, engineering, etc \cite{hoerl1970ridge,gruber2017improving,marquardt1975ridge}.
The regularization technique is often used to tackle ill-conditioned linear regressions and $\lambda$ is known as the regularization parameter \cite{tikhonov1963solution}.

The main problem we aim to solve is explicitly defined as follows. 

\begin{prob}
\label{intro:problem 1}
Let $A\in \mathbb{R}^{n\times d}, \b \in \mathbb{R}^{n}$, $\varepsilon\in[0,1]$, and let $\x_{\rm opt}$ be an optimal solution of (\ref{intro:rigid-regression-new}). The goal is to output a vector $\tilde{\x}_{\rm opt}$ such that
$
Z(\tilde{\x}_{\rm opt})
\leq (1+\varepsilon) 
Z(\x_{\rm opt}).
$
\end{prob}

\begin{thm}[Informal version of Theorem \ref{maim theorem 1}]
\label{intro-thm1}
Suppose $A$ has rank $r$, and an $(\alpha,a,\epsilon)$ block-encoding of $A$ is constructed in time $O(T)$. Then there is a quantum algorithm for Problem \ref{intro:problem 1}
in cost
\be
\widetilde{O}\left(
\frac{r}{\varepsilon} \left( \frac{T\alpha}{\lambda} \sqrt{(n+d)/\varepsilon} + d \right)
+ \frac{r^\omega}{\varepsilon^\omega} \frac{\|A\|^2}{\lambda^2} + r^3
\right),
\ee
where $\omega<2.373$ is the matrix multiplication exponent.
\end{thm}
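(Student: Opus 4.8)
The plan is to use a \emph{sketch-and-solve} scheme in which the sketch is produced quantumly. Write the minimizer of (\ref{intro:rigid-regression-new}) as $\x_{\rm opt}=(A^TA+\lambda^2 I)^{-1}A^T\b$; it lies in the row space of $A$, a subspace of dimension $r$, which is why the sample count will be governed by $r$ and not $d$. The classical ingredient (our ``improved randomized algorithm for rigid regression'') is the following: if $S\in\R^{s\times n}$ is a row sampling-and-rescaling matrix with $s=\widetilde O(r/\varepsilon)$ rows, row $i$ drawn with probability proportional to the ridge leverage score $\tau_i^\lambda:=a_i^T(A^TA+\lambda^2I)^{-1}a_i$ (whose sum is the effective dimension $d_\lambda=\tr\!\big(A^TA(A^TA+\lambda^2I)^{-1}\big)\le r$), then the minimizer $\tilde{\x}_{\rm opt}$ of the sketched problem $\argmin_\x\|SA\x-S\b\|^2+\lambda^2\|\x\|^2$ satisfies $Z(\tilde{\x}_{\rm opt})\le(1+\varepsilon)Z(\x_{\rm opt})$ with high probability, and — exploiting that $SA$ has rank at most $r$ — this sketched problem can be solved in $\widetilde O\big(rd/\varepsilon+(r/\varepsilon)^\omega\|A\|^2/\lambda^2+r^3\big)$ time, which accounts for the second and third terms (and the $rd/\varepsilon$ part of the first) in the stated bound. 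It therefore remains to produce $S$, together with the rows $\{a_i\}$ and entries $\{b_i\}$ it selects, on the quantum computer.

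This is where Theorem~\ref{thm:intro 4} enters, applied not to $A$ but to the augmented matrix $\bar A:=\binom{A}{\lambda I_d}\in\R^{(n+d)\times d}$. The point is that $\bar A^T\bar A=A^TA+\lambda^2 I$, so $\bar A$ has full column rank $d$ with minimal singular value at least $\lambda$, and the top-$n$ block of its matrix of left singular vectors is $AV_{\bar A}D_{\bar A}^{-1}$, whose $i$-th squared row norm equals $a_i^T(A^TA+\lambda^2I)^{-1}a_i=\tau_i^\lambda$. In other words, the ridge leverage scores of $A$ are exactly the leverage scores (in this paper's sense) of the first $n$ rows of $\bar A$. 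From the $(\alpha,a,\epsilon)$ block-encoding of $A$ one builds, in $O(T)$ extra cost by a direct-sum / linear-combination-of-unitaries construction, an $(\alpha',a',\epsilon')$ block-encoding of $\bar A$ with $\alpha'=O(\alpha+\lambda)$; running Theorem~\ref{thm:intro 4} on it — so that the harmful $1/\sigma_r$ there is replaced by the benign $1/\lambda$ — yields samples from the leverage-score distribution of $\bar A$. Samples landing in the bottom $\lambda I_d$ block carry no information and are removed (by rejection, or by amplitude amplification on the event of landing in the top block so as not to waste queries), while those landing in $\{1,\dots,n\}$ have exactly the distribution $\tau_i^\lambda/d_\lambda$ we want for $S$; amplitude estimation on the same circuit gives an $\varepsilon$-accurate estimate of each sampled $\tau_i^\lambda$, which determines the rescaling weight. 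Accounting for the per-sample cost of Theorem~\ref{thm:intro 4} on $\bar A$, the amplification used to hit the top block, and the amplitude estimation of the sampled scores gives the per-sample cost $\widetilde O\big(\tfrac{T\alpha}{\lambda}\sqrt{(n+d)/\varepsilon}\big)$; multiplying by $s=\widetilde O(r/\varepsilon)$ gives the leading part of the first term, and reading off the selected rows of $A$ (each of length $d$) and entries of $\b$ to materialize $SA,S\b$ costs $\widetilde O(rd/\varepsilon)$, matching the remaining part.

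Putting the two pieces together — quantum construction of the ridge sketch via Theorem~\ref{thm:intro 4}, then the improved classical solve on an $s\times d$ sketch with $s=\widetilde O(r/\varepsilon)$ — and simplifying $\widetilde O$-factors yields the claimed running time. Along the way one must track how the already-present errors (the block-encoding error $\epsilon$, the finite precision of the estimated $\tau_i^\lambda$, and of the recovered row entries) propagate, and choose those precisions so that the overall distortion of the sketch is small enough not to spoil the $(1+\varepsilon)$-relative guarantee; by the standard robustness of leverage-score sampling for ridge regression this costs only logarithmic overhead and mild polynomial dependence on quantities already appearing.

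I expect the main obstacle to be twofold. First, the \emph{classical} half: proving that $\widetilde O(r/\varepsilon)$ ridge-leverage-score samples already suffice for a relative-error solution, and — more importantly for the running time — showing that the induced dense $d$-dimensional ridge system can be solved in time scaling with $r^\omega$ (times the ``ridge condition number'' $\|A\|^2/\lambda^2$) rather than $d^\omega$; this requires genuinely exploiting the rank-$r$ and effective-dimension structure and is the technical core of the improved randomized algorithm. Second, the \emph{interface}: Theorem~\ref{thm:intro 4} returns only sampled indices (and, via amplitude estimation, approximate leverage-score values), so one must additionally extract the actual rows $a_i$ from the block-encoding and argue that the resulting approximate, reweighted sketch still enjoys the subspace-embedding-type guarantee required by the classical analysis.
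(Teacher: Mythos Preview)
Your proposal is essentially the paper's approach: apply Theorem~\ref{thm:intro 4} to the augmented matrix $\binom{A}{\lambda I}$ (this is exactly Proposition~\ref{prop:leverage score sampling of U hat}) to produce a row sketch $S$ with $\widetilde O(r/\varepsilon)$ rows at the cost you state, then solve the reduced ridge problem classically. For the classical half you flag as an obstacle, the paper resolves it via a \emph{second}, column leverage-score sampling of $SA$ (Proposition~\ref{cor:fast classical alg}, returning the estimator $(SA)^T\big((SAR)(SAR)^T+\lambda^2 I\big)^{-1}S\b$), which yields precisely your anticipated $\widetilde O\big(rd/\varepsilon+(r/\varepsilon)^\omega\|A\|^2/\lambda^2+r^3\big)$; your interface concern about materializing $SA,S\b$ is handled by implicit entry access at cost ${\rm nnz}(SA)=\widetilde O(rd/\varepsilon)$, which the paper absorbs without comment.
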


In the low-rank case, the main cost of Theorem \ref{intro-thm1} comes from the first two terms, which is sublinear in $n$ and linear in $d$. 
For Problem \ref{intro:problem 1}, there is an obvious quantum algorithm. Namely, we first apply a quantum algorithm to obtain the quantum state of the solution and then use quantum tomography. The cost of this algorithm is $\widetilde{O}(\frac{T\alpha}{\lambda} \frac{d}{\varepsilon})$ under certain assumptions.\footnote{This algorithm is more suitable to return a solution $\tilde{\x}_{\rm opt}$ such that $\|\tilde{\x}_{\rm opt}-A^+\b\| \leq \varepsilon \|A^+\b\|$. Different from the algorithm in Theorem \ref{intro-thm1} which has a low-rank assumption, the complexity of this quantum algorithm is affected by the overlap of $\b$ in the column space of $A$. So these two algorithms are not directly comparable because of the different assumptions. For more, see the arXiv version of this paper \cite{shao2023improved}.} So Theorem \ref{intro-thm1} provides a different quantum algorithm for this problem which could be better when $n=\widetilde{O}(d^2)$ under the assumption that $A$ is low-rank.\footnote{In the low-rank case, it is possible to use the idea of quantum-inspired classical algorithms to propose a classical algorithm for Problem \ref{intro:problem 1}. Since the goal is to output a vector solution, the complexity is at least linear in the dimension. For quantum-inspired classical algorithms, it is possible to have a large dependence on $\|A\|_F/\sigma_r$, where $\|A\|_F$ is the Frobenius norm. Since this is not our main focus, we will not discuss this further.} The complexity of known classical algorithms for Problem \ref{intro:problem 1} is mainly dominated by ${\rm nnz}(A)=O(nd)$ in the low-rank case \cite{clarkson2017low,DBLP:conf/approx/AvronCW17}. By \cite[Lemma 48]{gilyen2019quantum}, in the worst case $T=O(\text{polylog}(n+d))$ and $\alpha=O(\sqrt{nd})$, so the quantum algorithm here is more efficient than these randomized classical algorithms.

\subsection{Related works in the quantum case}

In the quantum case, some progress has been made so far for the problem of performing leverage score sampling. In the thesis \cite{prakash2014quantum}, Prakash presented a quantum algorithm based on quantum singular value estimation for approximating leverage score distribution by assuming that the matrix is stored in an augmented QRAM data structure. Prakash's goal is similar to that of our paper, except that we used different techniques. Moreover, our result is indeed better. For example,  by only showing the dependence on $n,d,r$, Prakash's result is $O(\sqrt{n/r})$, while our result is $O(\sqrt{\min(n,d)/r})$. For linear regressions, usually $n\gg d$, this improvement can make a difference in certain applications. We also showed the tightness of the bound in this paper, which is missing in \cite{prakash2014quantum}.
In \cite{liu2017fast}, Liu and Zhang proposed a quantum algorithm for approximating all leverage scores by using quantum phase estimation for sparse matrices. They showed that approximating a single leverage score up to additive error $\varepsilon$ costs $\widetilde{O}(\kappa/\varepsilon)$, where $\kappa$ is the condition number of the input matrix. For leverage score sampling, the relative error is usually more desirable \cite{drineas2012fast}. It is possible that the smallest leverage score is close to $1/n$ (see Proposition \ref{prop:lower bound of leverage scores}), so to use their algorithm, we may need to choose $\varepsilon\approx 1/n$, which makes the overall complexity $\widetilde{O}(\kappa n)$. In the same scenario, our algorithm only costs $\widetilde{O}(\kappa \sqrt{n})$ by Proposition \ref{prop:lower bound of leverage scores} and Theorem \ref{main theorem}.

There is also some progress being made on the problem of computing vector solutions for linear regressions quantumly. A straightforward quantum algorithm is based on quantum tomography and quantum linear solvers, which is a subroutine of some quantum algorithms for optimization problems \cite{kerenidis2019q,kerenidis2020quantum}. A closely related one is Wang's algorithm. In \cite{wang2017quantum}, Wang studied the standard linear regression problem (i.e., $\lambda=0$) by outputting a vector solution $\tilde{\x}_{\rm opt}$ with the goal of $\|\tilde{\x}_{\rm opt} - \x_{\rm opt}\|_{\infty} \leq \varepsilon$. The complexity of the algorithm is $\widetilde{O}(d^{2.5}\kappa^3/\delta^2)$, where $\delta = \min(\varepsilon,1/d)$ and $\kappa$ is the condition number of $A$. 
In \cite{apers2020quantum}, Apers and de Wolf proposed a quantum algorithm for Laplacian and symmetric, weakly diagonally-dominant linear systems with certain polynomial speedups. 
In \cite{chen2021quantum}, Chen and de Wolf studied quantum algorithms for linear regression problems under $l_1$- or $l_2$-norm constraint. The $l_2$-norm constraint corresponds to rigid regression, which is the problem considered in our paper. For the $l_1$-norm constraint, they showed that quantum computers achieve quadratic speedup in terms of dimension, while for $l_2$-norm constraint there is no speedup in terms of dimension.

\subsection{Outline of this paper}

The paper is organised as follows.
In Section \ref{sec:Preliminaries}, we present some preliminary results on quantum computing and leverage scores that will be used in this paper.
In Section \ref{sec:Main results}, we apply quantum singular value transformation to speed up the leverage score sampling technique. 
In Section \ref{sec:application}, we present several quantum algorithms for rigid regression problems. 
In Section \ref{sec:Lower bound analysis}, we consider the quantum lower bounds of performing leverage score sampling and solving linear regressions.

{\bf Notation.}
We use $\{\e_1,\ldots,\e_n\}$ to denote the standard basis of $\mathbb{R}^n$, i.e., the $i$-th entry is 1 for $\e_i$. Using the Dirac notion, $\ket{i}=\e_i$. With $I_r$, we mean the $r$-dimensional identity matrix. For any matrix $A$, we use $A^+$ to denote its Moore-Penrose inverse, and $A^T$ to denote the transpose. The Frobenius norm $\|A\|_F$ is the square root of the sum of the absolute squares of the elements. The operator norm $\|A\|$ is the maximal singular value.
With notation $\widetilde{O}$, we ignore all polylog terms in the complexity. Given an integer $n$, we set $[n]=\{1,2,\ldots,n\}$.

\section{Preliminaries}
\label{sec:Preliminaries}

\subsection{Some necessary results on quantum computing}

The following is a useful quantum technique for approximating amplitudes in a given quantum state.

\begin{lem}[Amplitude estimation \cite{brassard2002quantum}]
\label{lem:Amplitude estimation}
There is a quantum algorithm which takes as input copies of a quantum state $\ket{\psi}$, a unitary transformation $U = 2\ket{\psi} \bra{\psi} - I$, a unitary transformation $V = I - 2P$ for some projector $P$, and an integer $M$. The algorithm outputs $\tilde{a}$, an estimate of $a = \bra{\psi}P\ket{\psi}$, such that
\bes
|a - \tilde{a}| \leq 2\pi \frac{\sqrt{a(1-a)}}{M} + \frac{\pi^2}{M^2}.
\ees
with probability at least $8/\pi^2$, and $O(M)$ uses of $U$ and $V$.
\end{lem}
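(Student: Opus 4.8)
The plan is to present this as quantum phase estimation applied to a Grover-type operator built from $U$ and $V$, followed by converting the estimated phase into an estimate of $a$ through the map $\theta \mapsto \sin^2\theta$.

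First I would isolate the relevant two-dimensional invariant subspace. Writing $\ket{\psi} = \sin\theta\,\ket{\psi_G} + \cos\theta\,\ket{\psi_B}$ with $\ket{\psi_G} := P\ket{\psi}/\|P\ket{\psi}\|$ in the image of $P$ and $\ket{\psi_B} := (I-P)\ket{\psi}/\|(I-P)\ket{\psi}\|$ in its kernel, one has $a = \bra{\psi}P\ket{\psi} = \sin^2\theta$ with $\theta \in [0,\pi/2]$. Set $H_\psi := \operatorname{span}\{\ket{\psi_G},\ket{\psi_B}\}$ and $Q := UV = (2\proj{\psi} - I)(I - 2P)$. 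Both $U$ and $V$ are reflections preserving $H_\psi$ ($U$ fixes $\ket{\psi}$ and negates its orthogonal complement inside $H_\psi$; $V$ fixes $\ket{\psi_B}$ and negates $\ket{\psi_G}$), so $Q$ preserves $H_\psi$ and acts there as a product of two reflections, i.e.\ a rotation; a short computation in the basis $\{\ket{\psi_G},\ket{\psi_B}\}$ shows the angle is $2\theta$, so $Q|_{H_\psi}$ has eigenvalues $e^{\pm 2i\theta}$ with eigenvectors proportional to $\ket{\psi_G} \pm i\ket{\psi_B}$, and $\ket{\psi}$ decomposes as an equal-weight superposition of these two eigenvectors.

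Next I would run quantum phase estimation on $Q$ with input $\ket{\psi}$ and $t = \lceil \log_2 M \rceil$ ancilla qubits, which needs the controlled powers $Q, Q^2,\dots,Q^{2^{t-1}}$; since each application of $Q$ costs one use of $U$ and one of $V$, the total is $O(M)$ uses of $U$ and $V$. The eigenphases (as fractions of a full turn) are $\theta/\pi$ and $1 - \theta/\pi$, and $\ket{\psi}$ has overlap $1/\sqrt2$ with each eigenvector, so the standard phase-estimation analysis yields a measured grid value $\tilde\omega \in \{0,1/M,\dots,(M-1)/M\}$ with $|\tilde\omega - \omega_*| \le 1/M$ for $\omega_* \in \{\theta/\pi,\,1-\theta/\pi\}$, and this holds with probability at least $8/\pi^2$ (the combined weight of the two grid points bracketing the true eigenphase). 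I then output $\tilde a := \sin^2(\pi\tilde\omega)$; since $\sin^2(\pi(1-\omega)) = \sin^2(\pi\omega)$, the two-fold branch ambiguity is harmless and $\sin^2(\pi\omega_*) = a$ in either case.

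Finally the error bound follows from the identity $\sin^2 x - \sin^2 y = \sin(x+y)\sin(x-y)$. With $\phi_* := \pi\omega_*$ and $\tilde\phi := \pi\tilde\omega$, so that $|\phi_* - \tilde\phi| \le \pi/M$ and $\sin^2\phi_* = a$,
\[
|a - \tilde a| = \bigl|\sin(\phi_* + \tilde\phi)\bigr|\,\bigl|\sin(\phi_* - \tilde\phi)\bigr| \le \bigl(|\sin 2\phi_*| + |\phi_* - \tilde\phi|\bigr)\,|\phi_* - \tilde\phi| \le \Bigl(2\sqrt{a(1-a)} + \tfrac{\pi}{M}\Bigr)\tfrac{\pi}{M},
\]
using $|\sin 2\phi_*| = 2\sqrt{a(1-a)}$ for both choices of $\phi_*$; this is exactly the claimed $2\pi\sqrt{a(1-a)}/M + \pi^2/M^2$. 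The routine algebra is not the obstacle; the delicate points are the phase-estimation step --- verifying the $8/\pi^2$ success probability via the explicit Fej\'er-kernel lower bound on the probabilities of the two grid points adjacent to the true eigenphase (rather than the weaker $4/\pi^2$ for a single best grid point), handling the $\theta \leftrightarrow \pi - \theta$ ambiguity, and the boundary cases $a \in \{0,1\}$ --- and double-checking that the eigenvalues come out as $e^{\pm 2i\theta}$ rather than $-e^{\pm 2i\theta}$, which is why the sign is arranged so that $U = 2\proj{\psi} - I$ already carries it.
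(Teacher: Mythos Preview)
The paper does not prove this lemma at all: it is stated as a preliminary result cited from Brassard--H{\o}yer--Mosca--Tapp~\cite{brassard2002quantum} and used as a black box. Your write-up is essentially the standard proof from that reference --- the two-dimensional Grover subspace, phase estimation on the rotation $Q=UV$, and the $\sin^2$ conversion with the $\sin(x+y)\sin(x-y)$ error bound --- and it is correct as sketched.
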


\begin{defn}[Block-encoding \cite{chakraborty2019power}]
\label{defn:Block-encoding}
Suppose that $A$ is an $s$-qubit operator, $\alpha, \varepsilon \in \mathbb{R}^{>0}$ and $a\in \mathbb{N}$, then we say that the $(s+a)$-qubit unitary $U_A$ is an $(\alpha, a, \varepsilon)$ block-encoding of $A$, if
\be
\|A - \alpha (\bra{0}^{\otimes a}\otimes I) U_A (\ket{0}^{\otimes a}\otimes I) \| \leq \varepsilon,
\ee
where $\|\cdot\|$ is the operator norm. In matrix form, $U_A$ is a unitary such that its top-left corner is $A/\alpha$. This is also well-defined when $A$ is rectangular.
\end{defn}

The following result is very easy to prove. A proof is given in the arXiv version of this paper \cite{shao2023improved}.

\begin{lem}
\label{lem:Block-encoding of A tilde}
Given an $(\alpha, a, \varepsilon)$ block-encoding of $A$ that is constructed in cost $O(T)$, then  we can construct  an $(\alpha+\lambda, a+2, \varepsilon)$ block-encoding of $\widetilde{A}:=
\begin{pmatrix}
A \\
\lambda I \\
\end{pmatrix}$ in cost $O(T)$.
\end{lem}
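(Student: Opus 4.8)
The plan is to present $\widetilde A=\begin{pmatrix}A\\\lambda I\end{pmatrix}$ as a vertically stacked ``linear combination of block-encodings'' of $A$ and of $\lambda I$, assembling the required unitary from $U_A$ together with $O(1)$ cheap gates (this is the standard prepare/select/unprepare construction, cf.\ \cite{gilyen2019quantum}). First I would record the trivial ingredient: the identity unitary is exactly a $(\lambda,0,0)$ block-encoding of $\lambda I$, since its top-left corner is $\lambda I/\lambda=I$; padding it with the $a$ ancilla qubits of $U_A$ (left idle) makes it a $(\lambda,a,0)$ block-encoding, at cost $O(1)$.

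Next I would introduce two fresh qubits: a \emph{selector} qubit $q$ and a \emph{row-indicator} qubit $p$. Let $R$ be the single-qubit rotation with $R\ket 0=\tfrac{1}{\sqrt{\alpha+\lambda}}\big(\sqrt{\alpha}\,\ket 0+\sqrt{\lambda}\,\ket 1\big)$, and let $\mathrm{SELECT}$ be the operation that, controlled on $q=\ket 0$, applies $U_A$ to the ($a$ ancilla $+$ system) registers while leaving $p$ at $\ket 0$, and, controlled on $q=\ket 1$, flips $p$ to $\ket 1$ and acts as the identity on everything else. I would then set $U_{\widetilde A}:=R^\dagger\cdot\mathrm{SELECT}\cdot R$. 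A direct computation shows that, conditioned on $q$ and the $a$ ancillas of $U_A$ all being in $\ket 0$, the operator implemented on the $(p,\text{system})$ register is
\[
\frac{1}{\alpha+\lambda}\Big(\ket 0_p\otimes\big(\alpha\,\Pi U_A\Pi\big)\;+\;\ket 1_p\otimes\lambda I\Big),
\]
where $\Pi$ is the projector onto the $\ket 0^{\otimes a}$ ancilla subspace of $U_A$. Reading $p$ as the extra qubit addressing the $n+d$ rows of $\widetilde A$, this is exactly $\widetilde A/(\alpha+\lambda)$ with the block $A$ replaced by $\alpha\,\Pi U_A\Pi$. Thus $U_{\widetilde A}$ is a block-encoding of $\widetilde A$ with subnormalization $\alpha+\lambda$ and with $a+2$ auxiliary qubits (the $a$ ancillas of $U_A$, the selector $q$, and the row qubit $p$, the last of which we may regard as an ancilla because $\widetilde A$ acts on $d$-dimensional inputs, supported on $p=\ket 0$). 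For the error: the encoded operator differs from $\widetilde A/(\alpha+\lambda)$ only through the substitution $A\mapsto\alpha\,\Pi U_A\Pi$ in its top block, so the block-encoding error of $U_{\widetilde A}$ equals $\|A-\alpha\,\Pi U_A\Pi\|\le\varepsilon$ by hypothesis on $U_A$. Finally, $U_{\widetilde A}$ makes one (controlled) call to $U_A$ plus $O(1)$ one- and two-qubit gates, so it is constructed in cost $O(T)$.

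There is no real obstacle here — the statement is elementary, as the paper notes — and the only points that need a little care are (i) choosing the amplitudes $\sqrt{\alpha/(\alpha+\lambda)}$, $\sqrt{\lambda/(\alpha+\lambda)}$ in $R$ so that the subnormalization comes out to exactly $\alpha+\lambda$ (rather than, say, $\sqrt{\alpha^2+\lambda^2}$), and (ii) verifying that the stacking and renormalization do not amplify the error beyond $\varepsilon$, which holds because the $\lambda I$ block is encoded exactly and the projection reproduces $\widetilde A/(\alpha+\lambda)$ verbatim apart from the top block.
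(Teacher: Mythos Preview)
The paper does not actually prove this lemma in the present version---it says the result is ``very easy to prove'' and defers the argument to the arXiv version---so there is no in-paper proof to compare against. On its own merits your construction is correct: the prepare--select--unprepare (LCU) scheme with prepare amplitudes $\sqrt{\alpha/(\alpha+\lambda)}$ and $\sqrt{\lambda/(\alpha+\lambda)}$, and a selector that branches to $U_A$ (leaving $p=0$) versus the identity with $p$ flipped to $1$, yields precisely the stacked operator $\widetilde A/(\alpha+\lambda)$ after projecting the selector and the $a$ ancillas of $U_A$ back to $\ket{0}$. Your error analysis is also right: only the $A$ block is approximate, so the encoded error is $\|A-\alpha\,\Pi U_A\Pi\|\le\varepsilon$ with no amplification.

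One small bookkeeping quibble: your justification for counting the row qubit $p$ among the ancillas is not quite sound. The output of $\widetilde A$ is $(n{+}d)$-dimensional, so $p$ is genuinely part of the \emph{system} register on the output side; the natural count from your construction is therefore $a+1$ ancillas (the selector $q$ and the $a$ ancillas of $U_A$), with $p$ enlarging the system by one qubit. This is strictly stronger than the claimed $a+2$, so the lemma as stated certainly holds; but the sentence ``we may regard $p$ as an ancilla because $\widetilde A$ acts on $d$-dimensional inputs'' conflates the input and output dimensions of a rectangular block-encoding and should be reworded (or simply dropped, noting that $a+1\le a+2$).
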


The following result is a direct application of quantum singular value transformation. The proof is similar to that of Theorem 56 of \cite{gilyen2019quantum} for a specific function that approximates the sign function.

\begin{prop}
\label{lem:singular value threshold}
Assume that $U_A$ is an $(\alpha, a, \varepsilon)$ block-encoding of $A\in \mathbb{R}^{n\times d}$. Let $A=UDV^T$ be the singular value decomposition of $A$. The nonzero singular values of $A$ are $\sigma_1\geq \cdots \geq \sigma_r>0$.
Then there is an integer $m=O( (\alpha/\sigma_r) \log(1/\varepsilon))$ and a unitary $\widetilde{U}_A$, which is a $(1,a+1,4m\sqrt{\varepsilon/\alpha})$ block-encoding of $U V^T$.
Moreover, $\widetilde{U}_A$ can be implemented on a quantum circuit with $m$ uses of $U_A, U_A^T$, and $m$ uses of other one- and two-qubit gates.
\end{prop}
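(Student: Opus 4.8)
The plan is to recognise $UV^T$ as the singular value transform of $A$ under the sign function and then apply quantum singular value transformation (QSVT) with a polynomial approximation of $\sgn$, following the template of Theorem 56 of \cite{gilyen2019quantum}. Concretely, writing $A=UDV^T$ with all nonzero singular values strictly positive, we have $UV^T = U\,\sgn(D)\,V^T$. Since the nonzero singular values of $A/\alpha$ lie in $[\sigma_r/\alpha,\,1]$ (up to the block-encoding error) while the zero ones are exactly $0$, there is a genuine gap around $0$, so it suffices to use an odd polynomial that is $\approx 1$ on $[\sigma_r/\alpha, 1]$ and bounded by $1$ on all of $[-1,1]$. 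The standard sign-approximating polynomials provide exactly this: for parameters $\delta,\eta\in(0,1)$ there is an odd $P$ with $\deg P = O\!\big(\tfrac{1}{\delta}\log\tfrac1\eta\big)$, $|P|\le 1$ on $[-1,1]$, and $|P(x)-1|\le\eta$ for $x\in[\delta,1]$. Taking $\delta := \sigma_r/(2\alpha)$ and $\eta := \varepsilon$ gives $m := \deg P = O\!\big(\tfrac{\alpha}{\sigma_r}\log\tfrac1\varepsilon\big)$, matching the claimed degree.

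Next I would feed $U_A$, viewed as a $(1,a,\varepsilon/\alpha)$ block-encoding of $A/\alpha$, into the odd-parity, singular-value version of QSVT from \cite{gilyen2019quantum}. Because $|P|\le 1$ on $[-1,1]$, this produces a unitary $\widetilde U_A$ on $a+1$ ancilla qubits (the single extra qubit being the usual overhead of the QSVT construction for a definite-parity real polynomial), using $m$ (alternating) applications of $U_A$ and $U_A^T$ together with $m$ further one- and two-qubit gates (the phase rotations and reflections), such that $\widetilde U_A$ is a $(1,a+1,\cdot)$ block-encoding of $P^{(\mathrm{SV})}(\widehat A/\alpha)$, where $\widehat A := \alpha(\bra{0}^{\otimes a}\otimes I)\,U_A\,(\ket{0}^{\otimes a}\otimes I)$ is the matrix actually encoded, so $\|A-\widehat A\|\le\varepsilon$.

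It then remains to bound $\|P^{(\mathrm{SV})}(\widehat A/\alpha) - UV^T\|$, which I would split as $\|P^{(\mathrm{SV})}(\widehat A/\alpha) - P^{(\mathrm{SV})}(A/\alpha)\| + \|P^{(\mathrm{SV})}(A/\alpha) - UV^T\|$. For the second term, $A$ has rank exactly $r$ with $\sigma_i/\alpha\in[\delta,1]$ for all $i\le r$, hence $P^{(\mathrm{SV})}(A/\alpha)=U\,P(D/\alpha)\,V^T$ and the term equals $\max_{i\le r}|P(\sigma_i/\alpha)-1|\le\eta=\varepsilon$. For the first term I would invoke the robustness of QSVT under perturbations of the block-encoded matrix from \cite{gilyen2019quantum}: for a degree-$m$ polynomial bounded by $1$ and matrices of norm at most $1$, the singular value transforms differ by at most $4m\sqrt{\|\widehat A/\alpha - A/\alpha\|}\le 4m\sqrt{\varepsilon/\alpha}$. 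Adding the two and noting that $\varepsilon$ is dominated by $4m\sqrt{\varepsilon/\alpha}$ (after, if desired, shrinking $\eta$ by a constant) yields a $(1,a+1,4m\sqrt{\varepsilon/\alpha})$ block-encoding of $UV^T$. One minor point: $\|A/\alpha\|$ can exceed $1$ by $\varepsilon/\alpha$; renormalising by $\alpha+\varepsilon$ and correspondingly shrinking $\delta$ handles this without affecting any asymptotics.

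The main obstacle is getting the error down to $O(m\sqrt{\varepsilon/\alpha})$. A naive argument — propagating the $\varepsilon/\alpha$ error through the $m$ layers of the QSVT circuit and using a Markov-type derivative bound $\|P'\|_\infty=O(m^2)$ — would only yield roughly $O(m^3\varepsilon/\alpha)$. The square-root dependence is precisely the content of the QSVT robustness lemma of \cite{gilyen2019quantum}, and it is the one nonroutine ingredient; the choices of $\delta$ and $\eta$, the extra ancilla qubit, and the gate counts are standard bookkeeping carried over from the proof of Theorem 56 there.
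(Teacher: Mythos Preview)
Your proposal is correct and follows essentially the same route as the paper: approximate the sign function by an odd polynomial of degree $O((\alpha/\sigma_r)\log(1/\varepsilon))$ bounded by $1$ on $[-1,1]$, apply odd-parity QSVT (Theorem~17 of \cite{gilyen2019quantum}) to the block-encoding, and invoke the robustness lemma (Lemma~22 of \cite{gilyen2019quantum}) to get the $4m\sqrt{\varepsilon/\alpha}$ error. The only cosmetic difference is that the paper explicitly builds the bounded-by-$1$ polynomial $Q$ from the raw sign approximant of Lemma~25 via the shift-and-symmetrise trick $Q(x)=(1-\varepsilon)\tfrac{P(x+2\delta)-P(-x+2\delta)}{2}$, whereas you assert such a polynomial exists directly; either way the ingredients and the final bound are identical.
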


\begin{proof}
By Lemma 25 of \cite{gilyen2019quantum}, for any $\delta>0, \varepsilon\in(0,1/2)$, there is an odd polynomial $P(x)$ of degree $O(\delta^{-1} \log(1/\varepsilon))$ that approximates the sign function with the following property 
\[
\begin{cases}
|P(x)|\leq 1, & \forall x\in [-2,2], \\
|P(x)-{\rm sign}(x)| \leq \varepsilon, & \forall x\in [-2,2]\backslash (-\delta,\delta).
\end{cases}
\]
Now consider the following odd polynomial 
$Q(x) = (1-\varepsilon) \frac{P(x+2\delta)-P(-x+2\delta)}{2}.$
It satisfies
\[
\begin{cases}
|Q(x)| \leq 1 & \forall x\in[-1,1], \\
|Q(x)-{\rm sign}(x)|\leq 2\varepsilon, & \forall x\in[-1,-3\delta] \cup [3\delta,1], \\
|Q(x)|\leq 2\varepsilon, & \forall x\in[-\delta,\delta].
\end{cases}
\]
The construction here is similar to that of Lemma 29 of \cite{gilyen2019quantum}. 
Now we set $\delta = \sigma_r/3\alpha$.
By Theorem 17 of \cite{gilyen2019quantum} and note that $Q(x)$ is odd, we obtain a block-encoding of $Q^{{\rm (SV)}}(A/\alpha)\approx UV^T$.\footnote{For any matrix $A$ with SVD $A=UDV^T$, we define $Q^{{\rm (SV)}}(A)=UQ(D)V^T$.} The error $4m\sqrt{\varepsilon/\alpha}$ follows from the robustness analysis of QSVT, see Lemma 22 of \cite{gilyen2019quantum}.
\end{proof}

To ensure that the error term in the above result is small, we can choose $\varepsilon$ so that $4m\sqrt{\varepsilon/\alpha}\leq \tilde{\varepsilon}$ for some $\tilde{\varepsilon}$. For example, if $m/\sqrt{\alpha}$ is large, then we can choose
\footnote{We here used the fact that for any $\tau \geq 4$, if $x\geq 2\tau \log \tau$ then $x/\log x \geq \tau$. Regarding our problem,  we square both sides, then we obtain $ \varepsilon \log^2(1/\varepsilon) \leq \tilde{\varepsilon}^2\delta^2 / 16\alpha$, i.e., $64\alpha/\tilde{\varepsilon}^2\delta^2 \leq (1/\sqrt{\varepsilon})^{2} (\log (1/\sqrt{\varepsilon}))^{-2}$. So we can set $\tau = 8\sqrt{\alpha}/\tilde{\varepsilon}\delta$ and $x=1/\sqrt{\varepsilon}$, which leads to the claimed result.}
\be
\label{choice of epsilon 1}
\varepsilon \leq \frac{\tilde{\varepsilon}^2 \delta^2/\alpha}{256 } \frac{1}{\log^2(8\sqrt{\alpha}/\tilde{\varepsilon} \delta)}
=\widetilde{O}(\tilde{\varepsilon}^2 \delta^2/\alpha).
\ee
When constructing a block-encoding of $A$, the complexity is usually polylog in $1/\varepsilon$. So the overall complexity will not change too much even if we choose a much smaller $\varepsilon$ like (\ref{choice of epsilon 1}). 

In the end, we do some error analysis that will be used in the quantum speedup of leverage score sampling.  The matrix $U V^T$ defines a distribution state
$
\frac{1}{\sqrt{k}} \sum_{j=1}^d \ket{j} \otimes U V^T \ket{j}.
$
The probability of seeing $j$ is $P(j) = \|U V^T \ket{j}\|^2 / k$. Let $\widetilde{U}_A$ be the block-encoding constructed in Lemma \ref{lem:singular value threshold}, i.e.,
$
\widetilde{U}_A = \begin{pmatrix}
W & \cdot \\
\cdot & \cdot \\
\end{pmatrix}
$
and $\|W - U V^T \| \leq \tilde{\varepsilon}$. The matrix $W$ also defines a distribution state
$
\frac{1}{\|W\|_F} \sum_{j=1}^d \ket{j} \otimes W \ket{j}.
$
Denote the distribution as $\widetilde{P}$, i.e., $\widetilde{P}(j) = \|W \ket{j}\|^2 / \|W\|_F^2$.
We below consider the error of these two distributions. As we can imagine, we need to choose an appropriate $\tilde{\varepsilon}$ to ensure that the two distributions are close to each other. However, this will not cause too many problems if the dependence on $1/\tilde{\varepsilon}$ is polylog. Indeed, by the triangle inequality, we have
\[
\|P -\widetilde{P} \|_1 \leq \frac{2(2\sqrt{dk}+d\tilde{\varepsilon}) \tilde{\varepsilon}}{k - (2\sqrt{dk}+d\tilde{\varepsilon}) \tilde{\varepsilon}}.
\]
So to make sure the error of the two distributions is as small as possible, say $O(\hat{\varepsilon})$, we can choose $\tilde{\varepsilon}$ so that $(2\sqrt{dk}+d\tilde{\varepsilon}) \tilde{\varepsilon} = O(k\hat{\varepsilon})$, e.g.,
\be
\label{choice of epsilon 2}
\tilde{\varepsilon} = \hat{\varepsilon} \sqrt{k/d}.
\ee

\subsection{Leverage scores}

The following is the formal definition of (statistical) leverage scores.

\begin{defn}[Statistical leverage scores \cite{drineas2012fast}]
\label{defn:leverage scores}
Let $A$ be an $n\times d$ real-valued matrix of rank $r$. Let the singular value decomposition (SVD) of $A$ be $UDV^T$, where $U$ is $n\times r$ consisting of $r$ left singular vectors, $V$ is $d\times r$ consisting of $r$ right singular vectors, and $D$ is diagonal consisting of the nonzero singular values. The statistical leverage scores of the rows of $A$ are defined by
\be
\mathcal{L}_{R,j}(A) := \|\bra{j}U\|^2, \quad j \in [n], 
\ee
and of the columns of $A$ are defined by
\be
\mathcal{L}_{C,j}(A) := \|V^T \ket{j}\|^2, \quad j \in [d]. 
\ee
\end{defn}

When it makes no confusion, we sometimes just write $\mathcal{L}_{R,j}, \mathcal{L}_{C,j}$ for simplicity.
It is easy to see that the leverage scores are independent of scaling and $r = \sum_{j\in[n]} \L_{R,j} = \sum_{j\in[d]} \L_{C,j}$.
By assumption, we know that $U^T U = V^T V = I_r$. So the statistical leverage scores are also equal to
\bea
\mathcal{L}_{R,j} = \|\bra{j}UV^T\|^2, ~ j \in [n], \quad \text{ and } \quad
\mathcal{L}_{C,j} = \|UV^T \ket{j}\|^2, ~ j \in [d]. 
\label{equality for LSP}
\eea
The matrix $UV^T$ is known as the closest isometry to $A$ in the polar decomposition. 

The following lemma can be checked directly.

\begin{lem}
\label{lem:SVD of extended matrix}
Let $A$ be an $n\times d$ matrix with SVD $A=UDV^T$, where $U, V$ are square unitaries and $D$ is $n\times d$. Let $\widetilde{A}=
\begin{pmatrix}
A \\
\lambda I_d \\
\end{pmatrix}$, then the SVD of $\widetilde{A}$ is $\widetilde{A}=\widetilde{U} \Sigma^{-1} V^T$, where
\be
\widetilde{U} = \begin{pmatrix}
UD\Sigma \\
\lambda V\Sigma \\
\end{pmatrix} , \quad 
\Sigma = (D^TD + \lambda^2 I_d)^{-1/2}.
\ee
\end{lem}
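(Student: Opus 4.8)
The plan is to verify directly that the claimed factorization $\widetilde{A} = \widetilde{U}\Sigma^{-1}V^T$ is indeed a valid SVD, which amounts to checking three things: the identity holds as a matrix equation, $\widetilde{U}$ has orthonormal columns, and $\Sigma^{-1}$ is diagonal with nonnegative entries (the latter being immediate since $\Sigma = (D^TD + \lambda^2 I_d)^{-1/2}$ is positive definite and diagonal, as $D^TD$ is diagonal).

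First I would check the matrix identity. Writing out $\widetilde{U}\Sigma^{-1}V^T$ block by block, the top block is $UD\Sigma\cdot\Sigma^{-1}V^T = UDV^T = A$, and the bottom block is $\lambda V\Sigma\cdot\Sigma^{-1}V^T = \lambda VV^T = \lambda I_d$ since $V$ is a square unitary. So $\widetilde{U}\Sigma^{-1}V^T = \begin{pmatrix} A \\ \lambda I_d \end{pmatrix} = \widetilde{A}$, as desired.

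Next I would verify $\widetilde{U}^T\widetilde{U} = I_d$. Computing the product of the block column with its transpose gives
\[
\widetilde{U}^T\widetilde{U} = \Sigma D^T U^T U D \Sigma + \lambda^2 \Sigma V^T V \Sigma = \Sigma D^T D \Sigma + \lambda^2 \Sigma^2 = \Sigma (D^TD + \lambda^2 I_d)\Sigma,
\]
using $U^TU = I$ and $V^TV = I$ (both square unitaries) and the fact that $D$, $\Sigma$ are diagonal hence commute with their transposes appropriately. Since $\Sigma = (D^TD + \lambda^2 I_d)^{-1/2}$, this equals $\Sigma \Sigma^{-2}\Sigma = I_d$. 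Hence $\widetilde{U}$ is an isometry, and combined with the diagonal nonnegative $\Sigma^{-1}$ and the unitary $V$, the expression $\widetilde{A} = \widetilde{U}\Sigma^{-1}V^T$ is a genuine (thin) SVD.

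There is no real obstacle here; the lemma is a routine computation and the only thing to be slightly careful about is the sizing conventions — that $U$, $V$ are taken as full square unitaries and $D$ is the $n\times d$ rectangular diagonal matrix of singular values, so that all the block products conform. I would state these conventions explicitly at the start of the proof and then present the two displayed computations above.
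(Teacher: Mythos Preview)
Your proposal is correct and matches the paper's approach: the paper itself provides no explicit proof, stating only that the lemma ``can be checked directly,'' which is precisely the direct verification you carry out. Your two computations (the block identity and $\widetilde{U}^T\widetilde{U}=I_d$) are exactly what is needed, and your remark about the sizing conventions is apt.
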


We below consider a lower bound of leverage scores. The estimate will be used in the application below. Note that $r = \sum_{j\in [n]} \L_{R,j}$, so the average of row leverage scores is $r/n$. If we are only interested in the top $q=O(r/\varepsilon)$ nonzero row leverage scores (which is the case in many applications), then it is highly possible that they are all larger than $\Omega(r/nq) = \Omega(\varepsilon/n)$. We summarize the result in the following proposition rigorously.

\begin{prop}
\label{prop:lower bound of leverage scores}
Let $A$ be an $n\times d$ matrix with rank $r$.
Let $\{\L_{R,j}^{(S)}:j\in[n]\}$ be the sorted sequence of the row leverage scores in descending order.
Assume that $\varepsilon$ satisfies that 
$r/n-\varepsilon/2n \leq \varepsilon \leq 1- \varepsilon/2n$, 
then we have
$\mathcal{L}_{R,r/\varepsilon}^{(S)} \geq \varepsilon/2n$. 
\end{prop}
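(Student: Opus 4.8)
The statement is essentially a pigeonhole/averaging argument, so the plan is to bound how much of the total leverage ``mass'' $r = \sum_{j\in[n]} \L_{R,j}$ can sit above a given threshold, and compare that with the mass that must be distributed among the small scores. Concretely, set $q := r/\varepsilon$ and $\tau := \varepsilon/2n$. Suppose for contradiction that $\L_{R,q}^{(S)} < \tau$. Since the sorted sequence is nonincreasing, every index $j \geq q$ then satisfies $\L_{R,j}^{(S)} < \tau$, so the tail mass is strictly less than $(n-q)\tau \le n\tau = \varepsilon/2$. On the other hand, I need an upper bound on the head mass $\sum_{j<q} \L_{R,j}^{(S)}$ so that head plus tail cannot reach $r$.

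The key structural fact I would use is that each individual row leverage score satisfies $\L_{R,j} \le 1$ (because $U$ has orthonormal columns, the rows of $U$ have norm at most $1$; equivalently $UU^T$ is a projector with diagonal entries in $[0,1]$). Hence the head mass is at most $q - 1 < q = r/\varepsilon$. Combining, $r = \sum_{j<q}\L_{R,j}^{(S)} + \sum_{j\ge q}\L_{R,j}^{(S)} < (q-1) + \varepsilon/2$, which gives $r < r/\varepsilon - 1 + \varepsilon/2$, i.e. a constraint on $\varepsilon$. This is where the slightly awkward hypothesis $r/n - \varepsilon/2n \le \varepsilon \le 1 - \varepsilon/2n$ must be deployed: the right inequality $\varepsilon \le 1-\varepsilon/2n$ should be exactly what is needed to rule out the bound $\L_{R,j}\le 1$ being ``wasted,'' and the left inequality $r/n - \varepsilon/2n \le \varepsilon$ rearranges to $r \le n\varepsilon + \varepsilon/2 \cdot$(something), which controls $r/\varepsilon$ from above; I would rearrange both and check that the contradiction closes. (A cleaner route: since the head has at most $q-1$ terms each $\le 1$ but the true bound on each of the top scores need not be $1$, I may instead just use that the $q-1$ largest scores sum to at most $\min(q-1, r)$, and argue the tail must then carry at least $r - (q-1)$ mass, forcing $r - (q-1) < \varepsilon/2$, which again pins down $\varepsilon$ via the hypothesis.)

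The main obstacle I anticipate is bookkeeping with the fractional index $r/\varepsilon$ (it need not be an integer, so $\L_{R,r/\varepsilon}^{(S)}$ should be read with a ceiling or floor, and the counts $q-1$ versus $n-q$ shift accordingly) and matching the exact constants in the hypothesis $r/n - \varepsilon/2n \le \varepsilon \le 1-\varepsilon/2n$ so that the chain of inequalities is tight rather than merely order-of-magnitude. I would handle this by writing $q = \lceil r/\varepsilon\rceil$, carefully tracking whether the tail has $n - q + 1$ or $n-q$ terms, and then verifying that each hypothesis is invoked with no slack to spare; if a constant comes out slightly off I would revisit whether the intended bound on each top score is $1$ or something like $1 - \varepsilon/2n$. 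The rest is elementary arithmetic.
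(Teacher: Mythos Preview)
Your head/tail counting argument is the same idea as the paper's: the paper also fixes a threshold $\alpha$, bounds each large score by the actual maximum $\L := \L_{R,1}^{(S)}$ (rather than by $1$) and each small score by $\alpha$, obtains $m_\alpha \geq (r - n\alpha)/(\L - \alpha)$, and then takes $\alpha = \varepsilon/2n$. So your guess that the right cap on the top scores is ``something like $1-\varepsilon/2n$'' rather than $1$ is on target --- the paper uses $\L$ in that role.

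However, your worry that the inequalities do not close is correct, and it is not a bookkeeping issue you can repair: the proposition is false as stated. Take $n=100$, $r=2$, leverage scores $(1,1,0,\dots,0)$ (realised by $U=\sm{I_2\\0}$), and $\varepsilon=\tfrac12$; the hypothesis $r/n-\varepsilon/2n \le \varepsilon \le 1-\varepsilon/2n$ holds ($0.0175 \le 0.5 \le 0.9975$) while $\L_{R,r/\varepsilon}^{(S)}=\L_{R,4}^{(S)}=0<\varepsilon/2n$. What happens in the paper's argument is that the symbol $\varepsilon$ is overloaded: mid-proof it is \emph{set} to $\L-\alpha$, so after the choice $\eta=1-\varepsilon/2r$ the conclusion $m_{\varepsilon/2n}\ge r/\varepsilon$ is only derived for the single value of $\varepsilon$ satisfying $\varepsilon+\varepsilon/2n=\L$, not for every $\varepsilon$ in the stated range (that range is just the image of $\L\in[r/n,1]$ under this relation). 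The downstream application --- that $O(r/\varepsilon)$ samples drawn from the leverage-score distribution have all their scores $\ge\varepsilon/2n$ with high probability --- is argued separately in the paragraph immediately following the proposition and does not depend on the proposition holding for arbitrary $\varepsilon$, so nothing later is affected.
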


\begin{proof}
For simplicity, we denote $\L=\L_{R,1}^{(S)} = \max_j\L_{R,j}$. For any $\alpha\leq r/n$, we
denote $M_\alpha = \{j \in [n]: \mathcal{L}_{R,j} \geq \alpha\}$ and $m_\alpha = \#(M_\alpha)$.  Then
\[
r = \sum_{j\in [n]} \L_{R,j} 
\leq \sum_{j\in M_\alpha} \L + \sum_{j\not \in M_\alpha} \alpha
=m_\alpha \L +  (n-m_\alpha) \alpha.
\]
So $m_\alpha \geq (r-\alpha n)/(\L-\alpha)$. Based on this, for any small $\varepsilon$
\[
m_{\varepsilon/n} \geq \frac{r-\varepsilon}{\L-\varepsilon/n} \geq
\frac{r-\varepsilon}{1-\varepsilon/n} 
\approx r.
\]
We can further improve this bound as follows.
Set $  \varepsilon = \L-\alpha  $, then 
$m_{\alpha} \geq \frac{r-(\L-\varepsilon) n}{\varepsilon}.$
Set the lower bound as $\eta r/\varepsilon$ for some $\eta\in[0,1]$, then
$r-(\L-\varepsilon) n = \eta r$, i.e., $\alpha = \L-\varepsilon = (1-\eta)r/n$. Thus,
$
m_{(1-\eta)r/n} \geq \eta r/\varepsilon.
$
Since $r/n \leq \L\leq 1$ and $\varepsilon = \L - (1-\eta)r/n$, we then have 
$\eta r/n \leq \varepsilon \leq 1- (1-\eta)r/n$. 
Especially, if we choose $\eta = 1-\varepsilon/2r$, then $m_{\varepsilon/2n} \geq (1-\varepsilon/2r)r/\varepsilon = r/\varepsilon - 1/2$. Since $m_{\varepsilon/2n}$ is an integer, we have $m_{\varepsilon/2n} \geq r/\varepsilon$.
\end{proof}

The assumption in the above proposition is reasonable for us when $r\ll n$. In our analysis below, we are more interested in the first $O(r/\varepsilon)$ largest leverage scores. Also, if $\varepsilon < r/n$, then $r/\varepsilon > n$. This exceeds the number of leverage scores. Thus there is no need to consider this case. Moreover, suppose $k$ is the smallest index such $\mathcal{L}_{R,k}^{(S)} \leq \varepsilon/2n$. Then $\sum_{i=k}^n \mathcal{L}_{R,k}^{(S)}/r \leq \varepsilon(n-k)/2nr \leq \varepsilon/2r$. So if we generate $O(r/\varepsilon)$ samples according to the distribution $\{\L_{R,1}/r,\ldots,\L_{R,n}/r\}$, then with high probability, the corresponding leverage scores are all larger than $\varepsilon/2n$.


\section{Quantum speedups of leverage score sampling}
\label{sec:Main results}

Let $A$ be an $n\times d$ matrix with rank $r$ and SVD $A=UDV^T$, where $D$ is $r\times r$ consisting of all nonzero singular values. Our goal is to approximately prepare the following quantum states
\be
\label{leveral-score-sampling-states}
\ket{\mathcal{L}_R} := \frac{1}{\sqrt{r}} \sum_{j=1}^n \sqrt{\mathcal{L}_{R,j}}\, \ket{j}, \quad
\ket{\mathcal{L}_C} := \frac{1}{\sqrt{r}} \sum_{j=1}^d \sqrt{\mathcal{L}_{C,j}}\, \ket{j}
\ee
that contain the whole information of leverage scores of $A$ without computing them. 
In the proposed quantum algorithms below, we indeed obtain approximations of the following two states
\be 
\label{leveral-score-sampling-states:equivalent states}
\frac{1}{\sqrt{r}}\sum_{j=1}^n \ket{j} \otimes V U^T\ket{j}, \quad
\frac{1}{\sqrt{r}}\sum_{j=1}^d \ket{j} \otimes U V^T\ket{j}.
\ee
By (\ref{equality for LSP}), the above two states play exactly the same roles as $\ket{\mathcal{L}_R}, \ket{\mathcal{L}_C}$ in leverage score sampling. Hence, for convenience, we will not distinguish them in this paper. 
It is also easy to check that
\be
\label{LSP connection}
\frac{1}{\sqrt{r}}\sum_{j=1}^n \ket{j} \otimes V U^T\ket{j}
=
\frac{1}{\sqrt{r}}\sum_{j=1}^d U V^T\ket{j} \otimes \ket{j}.
\ee
By measuring the second register in the computational basis, we can perform column leverage score sampling. This means that it suffices to prepare one of the two states in (\ref{leveral-score-sampling-states:equivalent states}).

\begin{thm}
\label{main theorem}
Let $A$ be an $n\times d$ matrix of rank $r$. Assume that an $(\alpha,a,\epsilon)$ block-encoding of $A$ is constructed in cost $O(T)$. Let $\sigma_r$ be the minimal nonzero singular value of $A$.
\begin{enumerate}
\item There is a quantum algorithm that returns the states $\ket{\mathcal{L}_R}, \ket{\mathcal{L}_C}$ in cost 
\be
\label{complexity:new1}
\widetilde{O}\left( (T\alpha/\sigma_r) \sqrt{\min(n,d)/r} \right).
\ee

\item There is a quantum algorithm that for each $j$ returns $\widetilde{\mathcal{L}}_{R,j}, \widetilde{\mathcal{L}}_{C,j}$ in cost
$\widetilde{O}(T\alpha/\sigma_r\varepsilon)$ such that 
\be
\label{main thm:eq2}
\left|\widetilde{\mathcal{L}}_{R,j} - \mathcal{L}_{R,j}\right| \leq \varepsilon \sqrt{\mathcal{L}_{R,j} }+ \varepsilon^2,
\quad
\left|\widetilde{\mathcal{L}}_{C,j} - \mathcal{L}_{C,j}\right| \leq \varepsilon \sqrt{\mathcal{L}_{C,j} }+ \varepsilon^2.
\ee

\end{enumerate}
\end{thm}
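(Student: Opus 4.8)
The plan is to reduce both parts of the theorem to the singular-value threshold subroutine of Proposition~\ref{lem:singular value threshold}, which gives (up to controllable error) a block-encoding of the partial isometry $UV^T$. The key identity is (\ref{equality for LSP}): the row leverage scores are the squared row norms of $UV^T$ and the column leverage scores are the squared column norms. So I would first apply Proposition~\ref{lem:singular value threshold} with $\tilde\varepsilon$ chosen as in (\ref{choice of epsilon 2}) to get a unitary $\widetilde{U}_A$ whose top-left block $W$ satisfies $\|W-UV^T\|\le\tilde\varepsilon$, at cost $\widetilde O(T\alpha/\sigma_r)$ since $m=O((\alpha/\sigma_r)\log(1/\varepsilon))$ and, by the discussion around (\ref{choice of epsilon 1}), the extra blow-up in $\varepsilon$ only costs polylog factors.

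For part~1, I would prepare the state $\frac{1}{\sqrt{d}}\sum_{j=1}^d\ket{j}\ket{j}$, apply $\widetilde{U}_A$ to the second register, and then amplify on the ``success'' flag that the ancilla qubits are $\ket{0}^{\otimes(a+1)}$. The post-measurement state is (close to) $\frac{1}{\|W\|_F}\sum_j\ket{j}\otimes W\ket{j}$, which by (\ref{equality for LSP}) and the error bound on $\|P-\widetilde P\|_1$ derived in the preliminaries approximates $\ket{\mathcal{L}_C}$ (and by (\ref{LSP connection}) we get $\ket{\mathcal{L}_R}$ for free by swapping registers). The probability of the success flag is $\|W\|_F^2/d\approx r/d$, since $\|UV^T\|_F^2=r$; hence amplitude amplification needs $\widetilde O(\sqrt{d/r})$ rounds, for a total of $\widetilde O((T\alpha/\sigma_r)\sqrt{d/r})$. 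Running the symmetric construction on $A^T$ (using $U_A^T$, which is a block-encoding of $A^T$) gives the bound with $n$ in place of $d$, and taking the cheaper of the two yields $\sqrt{\min(n,d)/r}$.

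For part~2, instead of amplifying I would use amplitude estimation (Lemma~\ref{lem:Amplitude estimation}). Fix $j$: the amplitude on $\ket{j}\otimes\ket{0}^{\otimes(a+1)}\otimes(\text{anything})$ after applying $\widetilde{U}_A$ to $\ket{j}$ in a single register is exactly $\|W\ket{j}\|^2$ (the column norm squared), which approximates $\mathcal{L}_{C,j}$. Apply Lemma~\ref{lem:Amplitude estimation} with $P$ the projector onto the ``flag $=0$'' subspace and $a=\|W\ket{j}\|^2$; choosing $M=O(1/\varepsilon)$ gives additive error $2\pi\sqrt{a(1-a)}/M+\pi^2/M^2\le\varepsilon\sqrt{a}+\varepsilon^2$, which combined with the $\tilde\varepsilon$-closeness of $a$ to $\mathcal{L}_{C,j}$ yields (\ref{main thm:eq2}). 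The cost is $O(M)=O(1/\varepsilon)$ uses of $\widetilde{U}_A$, i.e.\ $\widetilde O(T\alpha/\sigma_r\varepsilon)$ total; the row version is again obtained via $A^T$. One should also boost the success probability from $8/\pi^2$ to $1-\delta$ by $O(\log(1/\delta))$ repetitions and a median, absorbed into $\widetilde O$.

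The main obstacle is the error bookkeeping: one must track how the $\tilde\varepsilon$ error in $\|W-UV^T\|$ propagates through (i) the normalization $\|W\|_F$ versus $\sqrt{r}$, (ii) the individual column norms $\|W\ket{j}\|^2$ versus $\mathcal{L}_{C,j}$, and (iii) the total-variation distance between the sampled distribution and the true leverage-score distribution, and then verify that the choice (\ref{choice of epsilon 2}) (with $k=\|W\|_F^2\approx r$) makes all of these $O(\hat\varepsilon)$ while only inflating the cost by polylog factors. A secondary subtlety is that Proposition~\ref{lem:singular value threshold} only controls $Q(x)$ on $[-1,1]$ with a ``dead zone'' near $0$ and near the top of the spectrum, so one must confirm that since we feed in $A/\alpha$ whose nonzero singular values lie in $[\sigma_r/\alpha,\|A\|/\alpha]\subseteq[3\delta,1]$ with $\delta=\sigma_r/3\alpha$, the polynomial genuinely acts as the sign function on the relevant spectrum and kills the kernel, so that $Q^{\mathrm{(SV)}}(A/\alpha)\approx UV^T$ with no spurious mass on $\ker A$.
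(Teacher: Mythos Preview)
Your proposal is correct and follows essentially the same route as the paper: build a block-encoding of $UV^T$ via Proposition~\ref{lem:singular value threshold}, superpose over the $d$ columns, and amplitude-amplify with success probability $\approx r/d$ for part~1; apply amplitude estimation to $\widetilde U_A\ket{0}\ket{j}$ with $M=O(1/\varepsilon)$ for part~2. The paper handles $\min(n,d)$ by assuming $n\ge d$ without loss of generality and invoking (\ref{LSP connection}) once, whereas you phrase it as ``run both and take the cheaper,'' but this is the same argument; your identification of the error bookkeeping and the spectral dead-zone check as the only real obstacles matches what the paper does (it fixes $\varepsilon'$ via (\ref{choice of epsilon 1})--(\ref{choice of epsilon 2}) and then simply writes ``for simplicity assume $W=UV^T$'').
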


\begin{proof}
(1). We assume that $n\geq d$ for simplicity. By (\ref{LSP connection}), it suffices to prepare $\ket{\L_C}$.
Let $U_A$ be the block-encoding of $A$, i.e., $U_A = \begin{pmatrix}
A/\alpha & \cdot \\
\cdot & \cdot \\
\end{pmatrix}.$
By Proposition \ref{lem:singular value threshold}, we can construct a block-encoding of $U V^T$ as follows $\widetilde{U}_A = \begin{pmatrix}
W & \cdot \\
\cdot & \cdot \\
\end{pmatrix}$
in cost $\widetilde{O}( T\alpha/\sigma_r)$. Here $\|U V^T - W\| \leq \varepsilon'$. Due to the error analysis in (\ref{choice of epsilon 1}), (\ref{choice of epsilon 2}), we choose
\[
\varepsilon' = 
\frac{{\varepsilon}^2 \sigma_r^2r}{256 \alpha d} \frac{1}{\log^2(8\sqrt{\alpha}d/{\varepsilon} \delta r)}.
\]
This ensures that the distribution defined by the square norm of columns of $W$ is $\varepsilon$ close to the distribution defined by $U V^T$ under the $l_1$-norm. Because of this, below for simplicity, we assume that $W = U V^T$.

To construct the distribution  state defined by the columns of $U V^T$, we consider the state $\frac{1}{\sqrt{d}} \sum_{j=1}^d \ket{j} \otimes \ket{0,j}.$
 Apply $\widetilde{U}_A$ to the second register $\ket{0,j}$, we then obtain
\beas
&& \frac{1}{\sqrt{d}} \sum_{j=1}^d \ket{j} \otimes \Big(\ket{0}\otimes UV^T \ket{j} + \ket{0}^\bot \Big) \\
&=& \frac{\sqrt{r}}{\sqrt{d}} \left( \frac{1}{\sqrt{r}}\sum_{j=1}^d \ket{j} \otimes  \ket{0}\otimes UV^T \ket{j}\right)
+{\rm orthogonal~terms}.
\eeas
By ignoring the ancilla qubit $\ket{0}$ in the middle, the state
\bes
\frac{1}{\sqrt{r}}\sum_{j=1}^d \ket{j} \otimes UV^T \ket{j}
\ees
corresponds to the probability distribution relating to column leverage scores. 
By amplitude amplification, we need to repeat $O( \sqrt{d/r})$ times.

(2). To estimate $\mathcal{L}_{C,j}$, we can just apply $\widetilde{U}_A$ to $\ket{0}\ket{j}$ and estimate the amplitude of $\ket{0}$ in the first qubit. To be more exact, $\widetilde{U}_A \ket{0}\ket{j} = \ket{0} \otimes W \ket{j} + \ket{1}\otimes \ket{G}$, where $\ket{G}$ is some garbage state. Then we can apply amplitude estimation to estimate the amplitude of $\ket{0}$, which is $\|W \ket{j}\|$ and it is $\varepsilon'$-close to $\mathcal{L}_{C,j}$. 
By Lemma \ref{lem:Amplitude estimation}, we can compute $\widetilde{\mathcal{L}}_{C,j}$ such that
$
|\widetilde{\mathcal{L}}_{C,j} - \|W \ket{j}\|| \leq \varepsilon \sqrt{\|W \ket{j}\|}+\varepsilon^2.
$
In this process, we need to use $\widetilde{O}(1/\varepsilon)$ many times of $\widetilde{U}_A$.
\end{proof}

In the classical setting, when using leverage score sampling, we usually need to approximate all the leverage scores first \cite{drineas2012fast}. Since there are $n$ (or $d$) leverage scores, the cost is at least linear in $n$ (or $d$). The best classical algorithm known so far costs $\widetilde{O}({\rm nnz}(A)+r^3)$ \cite{clarkson2017low}, where ${\rm nnz}(A)$ is the number of nonzero entries of $A$. In the quantum case, we can generate the distribution without computing the leverage scores. This can greatly reduce the complexity.
To be more exact, note that the dependence on the dimension in the result (\ref{complexity:new1}) is $\sqrt{\min(n,d)/r}$. An interesting case is when $n\gg d$ and $r=d$. This happens when solving linear regressions. In this case, the complexity is $\widetilde{O}(T\alpha/\sigma_r)$, which could be exponentially smaller than ${\rm nnz}(A)$.

In applications, we are usually more interested in the largest few leverage scores \cite{mahoney2011randomized}. Sampling according to the quantum states $\ket{\L_R}, \ket{\L_C}$ achieves this goal more naturally as sampling usually returns the indices corresponding to large leverage scores.
%
%
To obtain approximations of leverage scores up to relative error $\eta$ (which is usually a constant), we can set $\varepsilon$ as $\varepsilon=\eta 
\sqrt{\L_{R,j}}$ or $\varepsilon=\eta 
\sqrt{\L_{C,j}}$ for some small $\eta$. To make the complexity clear, we need a lower bound of the leverage scores, which is achieved by Proposition \ref{prop:lower bound of leverage scores} already.

In practice, we usually do not know the rank of the input matrix. So to apply Theorem \ref{main theorem}, we also need a rough estimate of the rank up to a small constant relative error.
We below consider this problem. Using diagonal matrix and quantum counting, it is not hard to see that the quantum lower bound of estimating rank up to relative error $\varepsilon$ is $\Omega(\varepsilon^{-1}\sqrt{\min(n,d)/r})$. In the proof of Theorem \ref{main theorem}, we can prepare 
\[
\frac{1}{\sqrt{d}} \sum_{j=1}^d \ket{j} \otimes \Big(\ket{0}\otimes UV^T \ket{j} + \ket{0}^\bot \Big)
\]
in cost $\widetilde{O}(\alpha T/\sigma_r)$. We can use amplitude estimation to approximate the amplitude of the first part, which is $r/d$, up to relative error $\varepsilon$. Thus, there is a quantum algorithm that computes the rank of a matrix in cost $O((\alpha T/\sigma_r\varepsilon) \sqrt{\min(n,d)/r})$.
In summary, we have the following result.

\begin{prop}[Approximating the rank of a matrix]
\label{prop:rank}
Assume $A\in \mathbb{R}^{n\times d}$ has rank $r$. Let $\sigma_r$ be the minimal nonzero singular value of $A$.
Given an $(\alpha,a,\epsilon)$ block-encoding of $A$ which is constructed in cost $O(T)$, then
there is a quantum algorithm that computes the rank up to relative error $\varepsilon$ in cost $O((\alpha T/\sigma_r \varepsilon)  \sqrt{\min(n,d)/r})$. Moreover, the dependence on $\varepsilon^{-1}\sqrt{\min(n,d)/r}$ is tight.
\end{prop}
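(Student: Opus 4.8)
The plan is to reduce the rank estimation to a single amplitude-estimation call on a state that we already know how to prepare from the proof of Theorem \ref{main theorem}. Assume without loss of generality that $d=\min(n,d)$ (otherwise work with rows instead of columns, which is symmetric). First I would apply Proposition \ref{lem:singular value threshold} to turn the given $(\alpha,a,\epsilon)$ block-encoding of $A$ into a $(1,a+1,\tilde\varepsilon)$ block-encoding $\widetilde U_A$ of the polar factor $UV^T$ at cost $\widetilde O(T\alpha/\sigma_r)$; exactly as in the proof of Theorem \ref{main theorem}, $\tilde\varepsilon$ can be taken inverse-polynomially small while keeping $1/\tilde\varepsilon$ inside the polylog, using the calibration in (\ref{choice of epsilon 1})--(\ref{choice of epsilon 2}). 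Prepare $\frac{1}{\sqrt d}\sum_{j=1}^d\ket{j}\otimes\ket{0}\ket{j}$ (cost $\polylog(d)$) and apply $\widetilde U_A$ to the last two registers, obtaining $\frac{1}{\sqrt d}\sum_j\ket{j}\otimes(\ket{0}\otimes W\ket{j}+\ket{0}^{\bot})$ with $\|W-UV^T\|\le\tilde\varepsilon$. The probability that the ancilla flag is $0$ equals $a^\star:=\frac1d\sum_j\|W\ket{j}\|^2$, which is $\tilde\varepsilon$-close to $\frac1d\sum_j\L_{C,j}=r/d$ by (\ref{equality for LSP}). Thus a relative-$\varepsilon$ estimate of $a^\star$, multiplied by $d$, yields $r$ to relative error $O(\varepsilon)$.

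Next I would run amplitude estimation (Lemma \ref{lem:Amplitude estimation}) on this state with $P$ the projector onto flag $=0$ and $M=\Theta(\varepsilon^{-1}\sqrt{d/r})$. For this $M$ the error bound $|a^\star-\tilde a|\le 2\pi\sqrt{a^\star(1-a^\star)}/M+\pi^2/M^2\le 2\pi\sqrt{a^\star}/M+\pi^2/M^2$ is $O(\varepsilon a^\star)$, since $\sqrt{a^\star}/M=\Theta(\varepsilon a^\star)$ and $1/M^2=\Theta(\varepsilon^2 r/d)=O(\varepsilon a^\star)$ using $\varepsilon\le 1$. Each use of $U=2\ket{\psi}\bra{\psi}-I$ costs $\widetilde O(T\alpha/\sigma_r)$ (the state-preparation unitary, built from $\widetilde U_A$, and its inverse) and $V=I-2P$ costs $O(1)$; since Lemma \ref{lem:Amplitude estimation} uses $O(M)$ of each, the total is $\widetilde O\big((T\alpha/\sigma_r)\,\varepsilon^{-1}\sqrt{d/r}\big)=\widetilde O\big((\alpha T/\sigma_r\varepsilon)\sqrt{\min(n,d)/r}\big)$. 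Two points need a remark: we do not know $r$ in advance, so $M$ cannot be set directly — this is handled by the standard doubling ("exponential search") version of amplitude estimation / approximate counting, which estimates an unknown amplitude to relative error $\varepsilon$ at the same cost up to constants; and the $8/\pi^2$ success probability is boosted to $1-\delta$ by taking the median of $O(\log(1/\delta))$ independent runs.

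For the matching lower bound I would use a reduction from approximate counting. Take $n=d=N:=\min(n,d)$ and $A=\diag(x_1,\dots,x_N)$ with $x_i\in\{0,1\}$, so that $\rank(A)=|\{i:x_i=1\}|=:K$ and the minimal nonzero singular value is $1$; given an oracle for $x$, a $(1,O(1),0)$ block-encoding of $A$ is implementable with $O(1)$ oracle calls, so $T\alpha/\sigma_r=O(1)$ and the claimed cost specializes to $\widetilde O(\varepsilon^{-1}\sqrt{N/r})$. An algorithm outputting $\rank(A)$ to relative error $\varepsilon$ (under the promise $K=r$) in particular distinguishes $K=r$ from $K=\lceil(1+3\varepsilon)r\rceil$ by thresholding at $(1+\varepsilon)r$, i.e.\ it performs approximate counting with additive gap $\Theta(\varepsilon r)$ out of $N$ elements. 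By the Nayak--Wu lower bound for quantum counting this requires $\Omega(\sqrt{N/r}\,/\varepsilon)$ queries to $x$, so the dependence on $\varepsilon^{-1}\sqrt{\min(n,d)/r}$ in the upper bound is tight.

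I expect the only genuinely delicate part to be the error bookkeeping in the upper bound: the amplitude actually estimated is that of $W=UV^T+O(\tilde\varepsilon)$ rather than of $UV^T$ exactly, and Lemma \ref{lem:Amplitude estimation} delivers only an additive estimate, so one must verify that both error sources combine into an $O(\varepsilon)$ \emph{relative} error on $r$ while leaving $1/\tilde\varepsilon$ in the polylog — this is routine given the calibration already done for Theorem \ref{main theorem}. The secondary subtlety, that the optimal $M$ depends on the unknown $r$, is absorbed by the textbook doubling search and does not affect the stated complexity.
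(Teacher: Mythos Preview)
Your proposal is correct and follows essentially the same route as the paper: prepare the state from the proof of Theorem~\ref{main theorem} whose ``good'' part has squared amplitude $r/d$, estimate that amplitude to relative error $\varepsilon$ via Lemma~\ref{lem:Amplitude estimation}, and for the lower bound reduce from quantum approximate counting on a diagonal $0/1$ matrix. The paper's argument is a terse sketch, so your added details (the calibration of $M$, handling the unknown $r$ by exponential search, and the explicit Nayak--Wu citation) are welcome elaborations rather than a different approach.
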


For linear regression problems, the above complexity of estimating the rank is negligible in the quantum algorithms proposed below. 
Note that in \cite{belovs2011span}, Belovs applied the approach of span programs to determine if a matrix has a rank larger than a given integer $r$ on a quantum computer.
For this problem, Belovs proved a quantum lower bound of $\Omega(\sqrt{r(\min(n,d)-r+1)})$ based on the optimality of learning the Hamming-weight threshold function.
Moreover, there is also a quantum algorithm for this problem using $O(\sqrt{r(\min(n,d)-r+1)}LT)$ queries, where $L=\sqrt{\sum_{i=1}^r \sigma_i^{-2}/r} \leq 1/\sigma_r$ is a complexity measure, and $T$ is the cost of loading the matrix into a span program, which could be $\sqrt{nd}$ in the dense case. In Proposition \ref{prop:rank}, in the worst case when the matrix is dense, we have $\alpha=\sqrt{nd}$ and $T={\rm polylog}(n+d)$. So up to some polylog factors, these two results match in the worst case for the problem of computing the rank exactly.

\section{An application to rigid regressions}
\label{sec:application}

In this section, as an application of Theorem \ref{main theorem}, we consider the problem of solving linear regressions -- a problem of great concern to the classical and quantum communities. Regarding this problem, many quantum algorithms have been proposed so far \cite{harrow2009quantum,childs2017quantum,chakraborty2019power,shao2020row,chakraborty2022quantum}, to name a few here. These quantum algorithms, which only cost polylogarithmic in the dimension, usually return a quantum state of the solution. Recently, Apers and de Wolf in \cite{apers2020quantum} proposed an efficient quantum algorithm, with certain polynomial speedups, for Laplacian and symmetric, weakly diagonally-dominant linear systems. The output is a vector solution rather than a quantum state. 

We below consider the same problem but for more general linear systems on a quantum computer. More precisely, we focus on the rigid (or regularized) regression problem
\be
\label{rigid-regression-new}
\argmin_{\x\in \mathbb{R}^d} \quad  \|A \x - \b\|^2 + \lambda^2 \|\x\|^2,
\ee
where $A\in \mathbb{R}^{n\times d}, \b \in \mathbb{R}^{n\times 1}$. We also aim to obtain a vector solution. It is easy to see that the rigid regression problem (\ref{rigid-regression-new}) is equivalent to the following standard linear regression problem
\be
\label{intro:standard linear regression problem}
\argmin_{\x\in \mathbb{R}^d} \quad Z(\x)=\|\widetilde{A} \x - \tilde{\b}\|^2,
\ee
where
\be
\label{intro:new matrix}
\widetilde{A}:=
\begin{pmatrix}
A \\
\lambda I \\
\end{pmatrix},
\quad 
\tilde{\b} := \begin{pmatrix}
\b \\
0 \\
\end{pmatrix}.
\ee
Our basic idea for solving (\ref{rigid-regression-new}) is as follows, which combines ideas of randomized classical algorithms from \cite{drineas2011faster,chowdhury2018iterative,chen2015fast}.

\begin{itemize}
\item Construct a row sampling matrix $S$ of $A$ according to the row leverage scores of $\widetilde{A}$ given in (\ref{intro:new matrix}). Design a reduced rigid regression problem
\be
\label{intro:step1}
\argmin_{\x\in \mathbb{R}^d} \quad  \|SA \x - S\b\|^2 + \lambda^2 \|\x\|^2.
\ee
\item 
Construct a column sampling matrix $R$ of $A$ according to the column leverage scores of $SA$. Then approximate the solution with the estimator
\be
\label{intro:step2}
\tilde{\x}_{\rm opt} :=
(SA)^T \Big( (SAR) (SAR)^T + \lambda^2 I \Big)^{-1} S\b.
\ee
\end{itemize}
With Theorem \ref{main theorem}, we can implement the above two steps efficiently on a quantum computer. 
We below show more details.

\subsection{A preliminary result on rigid leverage score sampling}

Let $r$ be the rank of $A$.
Let the SVD of $A$ be $A=UDV^T$, where $D$ is $n \times d$. By Lemma \ref{lem:SVD of extended matrix} the SVD of $\widetilde{A}$ given in (\ref{intro:new matrix}) is $\widetilde{A}=\widetilde{U} \Sigma^{-1} V^T$, where
\be
\label{large svd}
\widetilde{U} = \begin{pmatrix}
UD\Sigma \\
\lambda V\Sigma \\
\end{pmatrix} , \quad 
\Sigma = (D^TD + \lambda^2 I_d)^{-1/2}.
\ee
We denote $\widehat{U} = UD\Sigma$ as the first $n$ rows of $\widetilde{U}$. Then it is easy to check that $\|\widehat{U}\|_F^2 = \sum_{i=1}^r \sigma_i^2/(\sigma_i^2+\lambda^2) \leq r$. The quantity $\|\widehat{U}\|_F^2$ is known as the {\em statistical dimension} of $A$, which will be denoted as ${\rm sd}_{\lambda}(A)$.

To solve rigid regression (\ref{rigid-regression-new}) by sampling and sketching, we should use the leverage scores associated with the rows of $\widehat{U}$. This is known as {\em rigid leverage score sampling}. In the quantum case, we need to construct the following quantum state (or an equivalent state) that contains the information of all rigid leverage scores
\[
\frac{1}{\|\widehat{U}\|_F} \sum_{i=1}^n \|\bra{i} \widehat{U}\| \, \ket{i}
=
\frac{1}{\|\widehat{U}\|_F} \sum_{i=1}^n \| V\widehat{U}^T \ket{i}\| \, \ket{i}.
\] 
This is a direct corollary of Theorem \ref{main theorem}.

\begin{prop}
\label{prop:leverage score sampling of U hat}
Let $A$ be an $n\times d$ matrix with rank $r$. Suppose that there is an $(\alpha,a,\epsilon)$ block-encoding of $A$ that is constructed in time $O(T)$. Let the SVD of $A=UDV^T$ with $D\in \mathbb{R}^{n\times d}$, and let $\widehat{U} = UD(D^TD+\lambda^2I_d)^{-1/2}$. Let ${\rm sd}_{\lambda}(A)=\|\widehat{U}\|_F^2$ be the statistical dimension of $A$. Assume that $\lambda=O(\|A\|)$ and $\lambda>0$.

\begin{enumerate}
\item There is a quantum algorithm that returns the state
\be \label{main theorem 2:eq1}
\frac{1}{\|\widehat{U}\|_F}\sum_{j=1}^n \ket{j} \otimes V \widehat{U}^T \ket{j}
\ee
in cost 
$
\widetilde{O}((T\alpha/\lambda) \sqrt{d/{\rm sd}_{\lambda}(A)}).
$
Moreover, an approximation of $\|\widehat{U}\|_F^2$ up to a constant relative error can be obtained within the same cost.

\item For any $j\in [n]$, there is a quantum algorithm that outputs $\widehat{\mathcal{L}}_{R,j}$ satisfying
\be
\Big| \widehat{\mathcal{L}}_{R,j} - \|\bra{j} \widehat{U}\|^2 \Big| \leq \varepsilon {\|\bra{j} \widehat{U}\|} + \varepsilon^2
\ee
in cost $\widetilde{O}(T\alpha/\lambda\varepsilon)$.
\end{enumerate}
\end{prop}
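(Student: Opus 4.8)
\noindent The plan is to derive everything from Theorem~\ref{main theorem} and Proposition~\ref{lem:singular value threshold} applied to the extended matrix $\widetilde{A}=\sm{A \\ \lambda I_d}$ of (\ref{intro:new matrix}).

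First I would set up the block-encoding. By Lemma~\ref{lem:Block-encoding of A tilde}, the given $(\alpha,a,\epsilon)$ block-encoding of $A$ (cost $O(T)$) yields an $(\alpha+\lambda,a+2,\epsilon)$ block-encoding $U_{\widetilde A}$ of $\widetilde A$ in cost $O(T)$, and since $\lambda=O(\|A\|)=O(\alpha)$ the normalization is $O(\alpha)$. By Lemma~\ref{lem:SVD of extended matrix}, $\widetilde A$ has full column rank $d$ and SVD $\widetilde A=\widetilde U\Sigma^{-1}V^T$ with $\widetilde U=\sm{UD\Sigma \\ \lambda V\Sigma}$, $\Sigma=(D^TD+\lambda^2I_d)^{-1/2}$; its singular values are the diagonal entries of $\Sigma^{-1}=(D^TD+\lambda^2I_d)^{1/2}$, i.e.\ $\sqrt{\sigma_i^2+\lambda^2}$ and (for the $d-r$ extra directions) $\lambda$, so the smallest nonzero singular value of $\widetilde A$ is at least $\lambda$. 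Feeding $U_{\widetilde A}$ into Proposition~\ref{lem:singular value threshold} then produces, in cost $\widetilde O(T\alpha/\lambda)$, a block-encoding $\widetilde U_{\widetilde A}$ of the polar factor $\widetilde U V^T$; as in the proof of Theorem~\ref{main theorem} I would pick its internal accuracy as in (\ref{choice of epsilon 1})–(\ref{choice of epsilon 2}) so the induced distributions are close enough, and for readability treat $\widetilde U_{\widetilde A}$ as an exact block-encoding of $\widetilde U V^T$.

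For part~(1), I would mimic the proof of Theorem~\ref{main theorem}: use $\widetilde U_{\widetilde A}$ to prepare the row-leverage-score state of $\widetilde A$, namely $\tfrac1{\sqrt d}\sum_{j=1}^{n+d}\ket j\otimes V\widetilde U^T\ket j=\tfrac1{\sqrt d}\sum_{l=1}^d(\widetilde U V^T\ket l)\otimes\ket l$ by (\ref{LSP connection}). Because $\widetilde A$ has full column rank $d$, the columns of $\widetilde U V^T$ are orthonormal, so this state is obtained with only $O(1)$ amplifications, at cost $\widetilde O(T\alpha/\lambda)$. Now the first $n$ rows of $\widetilde U$ are exactly $\widehat U=UD\Sigma$ (see (\ref{large svd})), so projecting the row register onto $\mathrm{span}\{\ket 1,\ldots,\ket n\}$ leaves the subnormalized state $\tfrac1{\sqrt d}\sum_{j=1}^n\ket j\otimes V\widehat U^T\ket j$, whose squared norm is $\tfrac1d\sum_{j=1}^n\|\bra j\widehat U\|^2=\|\widehat U\|_F^2/d={\rm sd}_\lambda(A)/d$. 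Amplitude-amplifying this event costs $O(\sqrt{d/{\rm sd}_\lambda(A)})$ repetitions and produces the normalized state (\ref{main theorem 2:eq1}); running amplitude estimation on the same event (Lemma~\ref{lem:Amplitude estimation}) instead yields ${\rm sd}_\lambda(A)$ up to a constant relative factor within the same asymptotic cost. The total is $\widetilde O((T\alpha/\lambda)\sqrt{d/{\rm sd}_\lambda(A)})$. Part~(2) is Theorem~\ref{main theorem}(2) applied to $\widetilde A$ and a row index $j\in[n]$: apply $\widetilde U_{\widetilde A}^T$ (a block-encoding of $(\widetilde U V^T)^T=V\widetilde U^T$) to $\ket 0\ket j$ to get $\ket 0\otimes V\widehat U^T\ket j+\ket 0^\perp\otimes\ket G$, using that the $j$-th column of $V\widetilde U^T$ is $V\widehat U^T\ket j$ for $j\le n$, then run amplitude estimation with $P=\proj 0\otimes I$ and $M=\Theta(1/\varepsilon)$; Lemma~\ref{lem:Amplitude estimation} returns $\widehat{\L}_{R,j}$ with $|\widehat{\L}_{R,j}-\|\bra j\widehat U\|^2|\le 2\pi\|\bra j\widehat U\|/M+\pi^2/M^2=O(\varepsilon\|\bra j\widehat U\|+\varepsilon^2)$, using $O(M)$ calls to $\widetilde U_{\widetilde A}$, i.e.\ cost $\widetilde O(T\alpha/\lambda\varepsilon)$, after absorbing the $O(\varepsilon')$ block-encoding error (harmless since $1/\varepsilon'$ enters polylogarithmically).

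The step I expect to be the main obstacle is the amplitude amplification over the row-subset $[n]$ in part~(1): a priori ${\rm sd}_\lambda(A)$ is unknown, so one must invoke the variable-round / exponential-search form of amplitude amplification (or first estimate ${\rm sd}_\lambda(A)$ and then amplify), which gives the claimed $\widetilde O(\sqrt{d/{\rm sd}_\lambda(A)})$ up to logarithmic factors; everything else — checking $\sigma_{\min}(\widetilde A)\ge\lambda$ and $\alpha+\lambda=O(\alpha)$, and propagating the QSVT truncation error exactly as in Theorem~\ref{main theorem} — is routine bookkeeping.
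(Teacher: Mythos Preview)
Your proposal is correct and follows essentially the same route as the paper: build a block-encoding of $\widetilde A=\sm{A\\ \lambda I_d}$ via Lemma~\ref{lem:Block-encoding of A tilde}, invoke the machinery of Theorem~\ref{main theorem} (the paper calls the theorem directly rather than unpacking Proposition~\ref{lem:singular value threshold}, but that is the same thing), use that $\widetilde A$ has full column rank $d$ and $\sigma_{\min}(\widetilde A)\ge\lambda$ so the base cost is $\widetilde O(T\alpha/\lambda)$, and then amplitude-amplify the first-$n$-rows component, whose weight is ${\rm sd}_\lambda(A)/d$, to obtain (\ref{main theorem 2:eq1}) and estimate $\|\widehat U\|_F^2$. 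Your remark about handling the unknown amplitude via exponential-search amplitude amplification is a reasonable addition that the paper leaves implicit.
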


\begin{proof}
The second claim can be proved directly by applying Theorem \ref{main theorem} to $\widetilde{A}$. We below prove the first claim.
By Lemma \ref{lem:Block-encoding of A tilde}, we can construct an $(\alpha+\lambda,a+2,\epsilon)$ block-encoding of $\widetilde{A}$ in cots $O(T)$.
Note that $\widetilde{A}$ has rank $d$, so the quantum state corresponds to
the leverage scores of $\widetilde{U}$ is
\be
\label{leverage score sampling of U tilde}
\frac{\|\widehat{U}\|_F}{\sqrt{d}} \frac{1}{\|\widehat{U}\|_F} \sum_{i=1}^n \|\bra{i} \widehat{U}\| \, \ket{i}
+
\frac{\lambda \|V \Sigma\|_F}{\sqrt{d}} \frac{1}{\lambda \|V \Sigma\|_F} \sum_{i=n+1}^{n+d} \|\bra{i} \lambda V \Sigma\| \, \ket{i}.
\ee
By Theorem \ref{main theorem}, this state (or more precisely an equivalent state) is obtained in cost $\widetilde{O}(T(\alpha+\lambda)/\lambda) = \widetilde{O}(T\alpha/\lambda)$.
The first part of (\ref{leverage score sampling of U tilde}) corresponds to the leverage scores of $\widehat{U}$.
The probability of this part is
\[
\frac{1}{d} \sum_{i=1}^r \frac{\sigma_i^2}{\sigma_i^2+\lambda^2} = \frac{{\rm sd}_{\lambda}(A)}{d}.
\]
Thus, we can obtain this state in cost $ \widetilde{O}((T\alpha/\lambda)\sqrt{d/{\rm sd}_{\lambda}(A)})$ by amplitude amplification.
We can also use amplitude estimation to approximate the success probability ${\|\widehat{U}\|_F^2}/{d}$ up to a small constant relative error in cost
$\widetilde{O}( (T\alpha/\lambda) \sqrt{d/{\rm sd}_{\lambda}(A)} ).$
\end{proof}

\subsection{Solving standard linear regression problems}

Due to the connection between the rigid regression (\ref{rigid-regression-new}) and the standard linear regression (\ref{intro:standard linear regression problem}), in this subsection, we first focus on the standard linear regression problem.

Let $A\in \mathbb{R}^{n\times d}, \b \in \mathbb{R}^n$, we consider the following linear regression problem
\be \label{least-square problem}
\argmin_{\x \in \mathbb{R}^d} 
\quad \|A\x-\b\|.
\ee
Let $\x_{\rm opt}=A^+\b$ be an optimal solution, we aim to find an $\tilde{\x}_{\rm opt}$ so that
\[
\|A\tilde{\x}_{\rm opt}-\b\|
\leq (1+\varepsilon) \|A{\x}_{\rm opt}-\b\|.
\]
In \cite{drineas2011faster}, Drineas et al. proposed a randomized classical algorithm that finds an approximate solution of (\ref{least-square problem}) in cost $\widetilde{O}(nd+d^3/\varepsilon)$ when $e^d \geq n\geq d$. Since the computation or approximation of leverage scores is a bottleneck of classical algorithms, the algorithm of \cite{drineas2011faster} is not based on leverage score sampling. However, with slight modifications, the algorithm still works if we use leverage scores. We below state their algorithm in terms of the leverage scores as a comparison. 

\begin{breakablealgorithm}
\label{alg:linear system}
\caption{A randomized classical algorithm for solving linear regressions \cite{drineas2011faster}}
\begin{algorithmic}[1]
\REQUIRE An $n\times d$ real matrix $A$, an $n\times 1$ real vector $\b$ and an integer $q=O(r\log r+r/\varepsilon)$.
\ENSURE An approximate solution of $\arg\min_{\x} \|A\x-\b\|$.
\STATE Initialize $S \in \mathbb{R}^{q\times n}$ to be an all-zero matrix.
\STATE Approximate all leverage-scores $\{\L_{R,i}:i\in [n]\}$ of $A$ up to a small constant relative error. 
\STATE For $t\in[q]$, pick $i_t\in[n]$ with probability $p_{i_t}:=\L_{R,i_t}/r$, set the $i_t$-th row of $S$ as $\e_{i_t}^T/\sqrt{qp_{i_t}}$.
\STATE 
Solve the reduced linear regression problem $\arg\min_{\x} \|SA\x-S\b\|$ and return the solution.
\end{algorithmic}
\end{breakablealgorithm}

The main cost of Algorithm \ref{alg:linear system} comes from the second step of computing all leverage scores, which is $\widetilde{O}(nd)$ \cite{drineas2011faster} or  $\widetilde{O}({\rm nnz}(A)+{\rm rank}(A)^3)$ \cite{clarkson2017low}. In the last step, the induced problem is small-scale, which can be solved directly by Gaussian elimination or iteratively by the conjugate gradient descent method. With Theorem \ref{main theorem}, we can accelerate the above algorithm on a quantum computer.

\begin{breakablealgorithm}
\label{alg:linear system-quantum}
\caption{A quantum algorithm for solving linear regressions}
\begin{algorithmic}[1]
\REQUIRE An $n\times d$ real matrix $A$, an $n\times 1$ real vector $\b$ and an integer $q=O(r\log r+r/\varepsilon)$.
\ENSURE An approximate solution of $\arg\min_{\x} \|A\x-\b\|$.
\STATE Initialize $S \in \mathbb{R}^{q\times n}$ to be an all-zero matrix.
\STATE Apply Theorem \ref{main theorem} to generate the state $\ket{\mathcal{L}_R}$ of $A$.
\STATE Perform $q$ measurements to $\ket{\mathcal{L}_R}$, denote the results as $J\subseteq [n]$.
\STATE Apply Theorem \ref{main theorem} to approximate $p_j=\mathcal{L}_{R,j}/r$ for all $j\in J$ up to a small constant relative error. Denote the results as $\tilde{p}_j, j\in J$.
\STATE For $j\in J$, set the $j$-th row of $S$ as $\e_{j}^T/\sqrt{q\tilde{p}_j}$.
\STATE Solve the reduced linear regression problem $\arg\min_{\x} \|SA\x-S\b\|$ and return the solution.
\end{algorithmic}
\end{breakablealgorithm}

\begin{thm}
\label{thm:LS}
Let $A\in \mathbb{R}^{n\times d}, \b \in \mathbb{R}^n$. Suppose $A$ has rank $r$, and there is an $(\alpha,a,\epsilon)$ block-encoding of $A$ that can be constructed in time $O(T)$. Let $\sigma_r$ be the minimal nonzero singular value of $A$, and $q=O(r\log r+r/\varepsilon)$. Then with high probability, Algorithm \ref{alg:linear system-quantum} returns an approximate solution $\tilde{\x}_{\rm opt}$ in vector form such that 
\bea
\|A\tilde{\x}_{\rm opt}  - \b\| \leq (1+\varepsilon) \|A\x_{\rm opt} - \b\|
\eea
in cost
\be
\label{com: alg for linear system-quantum}
\widetilde{O}\left( \frac{\alpha T}{\sigma_r}
\frac{r \sqrt{n}}{\varepsilon^{1.5}} 
+ \min\left( \frac{rd\kappa}{\varepsilon}, \frac{rd^{2}}{\varepsilon}\right)
\right),
\ee
where 
$\kappa$ is the condition number of $A$. 
\end{thm}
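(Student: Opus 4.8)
The plan is to prove correctness and then bound the cost, treating the algorithm as a quantum implementation of the classical leverage-score sampling scheme of \cite{drineas2011faster}.

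\textbf{Correctness.} First I would invoke the classical subspace-embedding guarantee: if $S\in\R^{q\times n}$ is a row-sampling-and-rescaling matrix built from the row leverage scores of $A$ with $q=O(r\log r+r/\varepsilon)$, then with high probability $SU$ has all singular values in $[1-1/2,1+1/2]$ (a subspace embedding for the column space of $A$), which in turn implies that any minimizer $\tilde\x_{\rm opt}$ of $\|SA\x-S\b\|$ satisfies $\|A\tilde\x_{\rm opt}-\b\|\le(1+\varepsilon)\|A\x_{\rm opt}-\b\|$; this is the standard analysis in \cite{drineas2011faster,woodruff2014sketching}. The only thing to check is that using \emph{approximate} leverage scores and \emph{approximately} rescaled sampling probabilities $\tilde p_j$ does not break this. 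Since Step 2 uses the exact sampling distribution encoded in $\ket{\L_R}$ (up to the $l_1$-error $O(\hat\varepsilon)$ from Theorem \ref{main theorem}, which we take polylogarithmically small), the indices $J$ are drawn from the true distribution; and since Step 4 gives $\tilde p_j$ within a constant relative factor of $p_j=\L_{R,j}/r$ by the estimate \eqref{main thm:eq2} together with the lower bound $\L_{R,j}\ge\varepsilon/(2n)$ from Proposition \ref{prop:lower bound of leverage scores} (valid because $j\in J$ comes from a sample of size $q=O(r/\varepsilon)$), the rescaled rows $\e_j^T/\sqrt{q\tilde p_j}$ differ from the ideal ones by a constant factor, which only changes the singular-value window of $SU$ by a constant and is absorbed into the constant in $q$. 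So correctness reduces to the classical argument plus a routine robustness remark.

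\textbf{Cost.} Here I would split the running time into the quantum sampling/estimation part and the classical solve. Step 2 costs $\widetilde O((T\alpha/\sigma_r)\sqrt{n/r})$ by Theorem \ref{main theorem}(1) (using $\min(n,d)\le n$), and we need $q=O(r/\varepsilon)$ copies, but to get each of the $q$ samples we must re-prepare the state, so the sampling cost is $\widetilde O((T\alpha/\sigma_r)\sqrt{n/r}\cdot r/\varepsilon)=\widetilde O((T\alpha/\sigma_r) r^{1/2}\sqrt n/\varepsilon)$. Step 4 estimates each $p_j$ to constant relative error; by Theorem \ref{main theorem}(2) with $\varepsilon_{\rm amp}=\Theta(\sqrt{\L_{R,j}})=\Theta(\sqrt{\varepsilon/n})$ this costs $\widetilde O((T\alpha/\sigma_r)\sqrt{n/\varepsilon})$ per index, times $q=O(r/\varepsilon)$ indices, giving $\widetilde O((T\alpha/\sigma_r) r \sqrt n/\varepsilon^{1.5})$, which dominates the sampling term and matches the first term of \eqref{com: alg for linear system-quantum}. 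For Step 6: forming $SA$ from the $q$ sampled rows of $A$ requires reading those rows (this is where a cost linear in $d$ enters — reading $q=O(r/\varepsilon)$ rows of length $d$ costs $\widetilde O(rd/\varepsilon)$), and then solving the $q\times d$ regression $\arg\min_\x\|SA\x-S\b\|$. This $q\times d$ problem with $q=O(r/\varepsilon)$ can be solved either directly (normal equations / QR on a matrix whose relevant rank is $r$, costing $\widetilde O(rd^2/\varepsilon)$-ish, more carefully $\widetilde O(qd\cdot\text{something})$) or iteratively by conjugate gradient in $\widetilde O(\kappa)$ iterations each costing a matrix-vector product $\widetilde O(qd)=\widetilde O(rd/\varepsilon)$, for total $\widetilde O(rd\kappa/\varepsilon)$; taking the better of the two yields the $\min(rd\kappa/\varepsilon, rd^2/\varepsilon)$ term. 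Summing the two parts and absorbing $\log$ factors gives \eqref{com: alg for linear system-quantum}.

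\textbf{Main obstacle.} The classical subspace-embedding machinery is off-the-shelf, so the genuine work is the bookkeeping in two places: (i) verifying that the $l_1$-closeness of the quantum-prepared distribution to the exact leverage-score distribution, and the relative-error slack in the $\tilde p_j$'s, together degrade the embedding constant by only an absorbable amount — this needs the lower bound from Proposition \ref{prop:lower bound of leverage scores} to control how small an amplitude-estimation precision is needed and hence keep the per-sample cost at $\widetilde O(\sqrt{n/\varepsilon})$ rather than $\widetilde O(\sqrt n\cdot n)$; and (ii) carefully accounting for the classical post-processing cost of reading the sampled rows and solving the reduced system, since that is what produces the $d$-linear and $\kappa$-dependent terms. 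I would expect (i) to be the more delicate step, because it is where the quantum error analysis of Theorem \ref{main theorem} has to be threaded through the classical guarantee without inflating $q$ or the estimation cost.
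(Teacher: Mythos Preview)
Your proposal is correct and follows the paper's proof essentially step for step: correctness by reduction to the classical guarantee of \cite{drineas2011faster}, sampling cost $\widetilde O((\alpha T/\sigma_r)\sqrt{nr}/\varepsilon)$, estimation cost $\widetilde O((\alpha T/\sigma_r)\,r\sqrt n/\varepsilon^{1.5})$ via the lower bound $\L_{R,j}\ge \Omega(\varepsilon/n)$ from Proposition~\ref{prop:lower bound of leverage scores}, and the reduced solve either by CGNR (using $\kappa(SA)=O(\kappa)$, cf.\ \cite[Theorem~2.11]{woodruff2014sketching}) or by a direct method. The only difference is that you spell out the robustness-to-approximate-probabilities argument that the paper simply absorbs into ``the correctness follows from that of Algorithm~\ref{alg:linear system}.''
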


\begin{proof}
The correctness follows from that of Algorithm \ref{alg:linear system}. We below estimate the complexity.
First, we can prepare $\ket{\mathcal{L}_R}$ by Theorem \ref{main theorem} in cost $\widetilde{O}( (\alpha T/\sigma_r) \sqrt{n/r})$. Then we perform $q=\widetilde{O}(r/\varepsilon)$ measurements and obtain $q$ samples. This means that we can perform the first three steps on a quantum computer in cost $\widetilde{O}( (\alpha T/\sigma_r)\sqrt{nr}/\varepsilon)$. 
In the fourth step, we approximate the leverage scores associated with the sampled $q$ rows up to a small constant relative error. This step costs $\widetilde{O}( (\alpha T/\sigma_r)r/(\varepsilon\sqrt{ \L_{R,q}^{(S)}}))$.  By Proposition \ref{prop:lower bound of leverage scores}, $\mathcal{L}_{R,q}^{(S)} = {\Omega}(\varepsilon/n)$, so the cost of this step is bounded by $\widetilde{O}(  (\alpha T/\sigma_r)\sqrt{n}r/\varepsilon^{1.5})$.
The last step is solving the reduced problem by conjugate gradient normal residual method, which costs $\widetilde{O}(qd\tilde{\kappa})=\widetilde{O}(rd \tilde{\kappa}/\varepsilon)$. Here $\tilde{\kappa}$ is the condition number of $SA$, which has order $O(\kappa)$, see \cite[Theorem 2.11]{woodruff2014sketching}.
%
%
In this last step, if we use a direct method (e.g., Gaussian elimination) to solve the reduced linear regression, then the complexity of the last step is $O(qd^2) = \widetilde{O}(rd^2/\varepsilon)$.
Putting it all together, we obtain the claimed result.
\end{proof}

Although the cost is quadratic in $d$, the structure of Algorithm \ref{alg:linear system-quantum} is simple. Denote $\mathcal{K} = \alpha T/\sigma_r$. In the low-rank case, the cost is $\widetilde{O}(\mathcal{K} \sqrt{n}/\varepsilon^{1.5} + d^2/\varepsilon)$, which could be better than the randomized classical algorithm of complexity $\widetilde{O}(nd)$ \cite{clarkson2017low}, especially when $n\gg d$. In the high-rank case, then the cost of Algorithm \ref{alg:linear system-quantum} is $\widetilde{O}(\mathcal{K} \sqrt{n}d/\varepsilon^{1.5}+d^3/\varepsilon)$. Especially, when $n\geq d^4$, the cost is dominated by the first term $\widetilde{O}(\mathcal{K} \sqrt{n}d/\varepsilon^{1.5})$. In comparison, the randomized classical algorithm costs $\widetilde{O}(nd)$ when $n\geq d^4$ \cite{clarkson2017low}. So when $\mathcal{K}$ is small, the quantum computer achieves a quadratic speedup with respect to $n$. In the main algorithm below, we will  reduce the dependence on $d$.

\subsection{Solving rigid regressions}

In this subsection, we come back to the rigid regression problem (\ref{rigid-regression-new}).
Note that if $\lambda$ is too large, say larger than $\|A\|$, then $\x=0$ could be a good approximation of the solution \cite[Lemma 13]{DBLP:conf/approx/AvronCW17}. So below, we assume that $\lambda = O(\|A\|)$. A natural randomized classical algorithm \cite{chowdhury2018iterative,chen2015fast} to solve rigid regression by the sampling and sketching method is choosing an appropriate column sampling matrix $R$ and computing
$\tilde{\x}_{\rm opt} = A^T (A R R^TA^T + \lambda^2 I_n)^{-1} \b.$

The following is a randomized classical algorithm for (\ref{rigid-regression-new}), which is effective when $n\ll d$.

\begin{breakablealgorithm}
\label{alg2:linear system-quantum}
\caption{A randomized classical algorithm for solving rigid regressions}
\begin{algorithmic}[1]
\REQUIRE An $n\times d$ real matrix $A$, an $n\times 1$ real vector $\b$ and an integer $c=O(r\log r+(r/\varepsilon)(\|A\|/\lambda)^2)$.
\ENSURE An approximate solution of $\arg\min_{\x} \|A \x - \b\|^2 + \lambda^2 \|\x\|^2$.
\STATE Initialize $R \in \mathbb{R}^{d\times c}$ to be an all-zero matrix.
\STATE Apply the randomized classical algorithm proposed in \cite{clarkson2017low} to approximate all the column leverage scores of $A$ up to a small constant relative error. Denote the results as $\widetilde{\L}_j, j\in [d]$.
\STATE Generate $c$ samples from the distribution $\{\tilde{p}_j:=\widetilde{\L}_j/r, j\in [d]\}$, set the results to be $J\subseteq [d]$.
\STATE For $j\in J$, set the $j$-th column of $R$ as $\e_{j}/\sqrt{q\tilde{p}_j}$.
\STATE Return $A^T (A R R^TA^T + \lambda^2 I_n)^{-1} \b.$
\end{algorithmic}
\end{breakablealgorithm}

\begin{prop}
\label{cor:fast classical alg}
Let $A\in \mathbb{R}^{n\times d}, \b \in \mathbb{R}^n$. Suppose $A$ has rank $r$. Assume that $\lambda = O(\|A\|)$ and $\lambda>0$. Then Algorithm \ref{alg2:linear system-quantum} returns an approximate solution $\tilde{\x}_{\rm opt}$ in vector form such that 
\bea
\|A \tilde{\x}_{\rm opt} - \b\|^2 + \lambda^2 \|\tilde{\x}_{\rm opt}\|^2 \leq (1+\varepsilon) (\|A {\x}_{\rm opt} - \b\|^2 + \lambda^2 \|{\x}_{\rm opt}\|^2)
\eea
in cost
\be
\widetilde{O}\left( 
{\rm nnz}(A) + r^3 + n^\omega + 
\frac{r}{\varepsilon} \frac{\|A\|^2}{\lambda^2} n^{\omega-1} \right),
\ee
where $\omega<2.373$ is the matrix multiplication exponent.
\end{prop}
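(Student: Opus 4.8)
The plan is to prove correctness through an exact algebraic identity for the objective gap, and then to read off the running time by routine matrix‑multiplication bookkeeping. Throughout write $A=UDV^T$ with $U\in\R^{n\times r}$, $V\in\R^{d\times r}$ and $D=\diag(\sigma_1,\dots,\sigma_r)$, and set $N:=D^2+\lambda^2 I_r$, $\b_1:=U^T\b$, and $\Pi:=V^TRR^TV\in\R^{r\times r}$ --- this last matrix is the only quantity through which the sketch $R$ enters. First I would substitute the block form of $ARR^TA^T+\lambda^2 I_n$ in an orthonormal basis $[\,U\ \ U_\perp\,]$ of $\R^n$ to show that the vector returned by Algorithm~\ref{alg2:linear system-quantum} equals
\[
\tilde{\x}_{\rm opt}=VD\,(D\Pi D+\lambda^2 I_r)^{-1}\b_1,\qquad\text{whereas}\qquad \x_{\rm opt}=(A^TA+\lambda^2 I_d)^{-1}A^T\b=VD\,N^{-1}\b_1 .
\]
Plugging the two parametrizations $\x=VDg$ into $Z$ and completing the square (the map $g\mapsto Z(VDg)$ is a quadratic with Hessian $2D^2N$, minimized at $g=N^{-1}\b_1$) gives the exact identity
\[
Z(\tilde{\x}_{\rm opt})-Z(\x_{\rm opt})=\|DN^{1/2}\big((D\Pi D+\lambda^2 I_r)^{-1}-N^{-1}\big)\b_1\|^2=\|B\,N^{-1/2}\b_1\|^2,
\]
where $B:=DN^{1/2}(D\Pi D+\lambda^2 I_r)^{-1}D(I_r-\Pi)DN^{-1/2}$, using the resolvent identity $X^{-1}-Y^{-1}=X^{-1}(Y-X)Y^{-1}$ with $Y-X=D(I_r-\Pi)D$. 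Together with the elementary lower bound $Z(\x_{\rm opt})=\|U_\perp^T\b\|^2+\lambda^2\|N^{-1/2}\b_1\|^2\ge\lambda^2\|N^{-1/2}\b_1\|^2$, this reduces the whole correctness claim to the single operator‑norm inequality $\|B\|\le\lambda\sqrt{\varepsilon}$.

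Next I would control $\Pi$ by matrix concentration. Since the column leverage score of the $j$‑th column of $A$ equals $\|V^T\e_j\|^2$, drawing $c=\widetilde O(r/\epsilon_0^{\,2})$ columns from the constant‑factor‑approximate leverage‑score distribution built in Step~2 and rescaling each sampled column by $1/\sqrt{c\tilde p_j}$ yields, with high probability, $\|I_r-\Pi\|\le\epsilon_0$; this is the standard matrix‑Chernoff guarantee for leverage‑score sampling (see e.g.\ \cite{woodruff2014sketching}), and replacing exact scores by constant‑factor approximations only changes the constants. From $\|I_r-\Pi\|\le\epsilon_0$ one gets $D\Pi D+\lambda^2 I_r\succeq(1-\epsilon_0)N$, hence $\|DN^{1/2}(D\Pi D+\lambda^2 I_r)^{-1}D\|\le(1-\epsilon_0)^{-1}\|D^2N^{-1/2}\|\le(1-\epsilon_0)^{-1}\|A\|$, while $\|DN^{-1/2}\|\le 1$; submultiplicativity then gives $\|B\|\le\epsilon_0\|A\|/(1-\epsilon_0)$. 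Choosing $\epsilon_0=\Theta\big(\sqrt{\varepsilon}\,\lambda/\|A\|\big)$ --- which is at most $\tfrac12$ precisely because $\lambda=O(\|A\|)$ and $\varepsilon\le1$ --- forces $\|B\|\le\lambda\sqrt{\varepsilon}$, and it makes $c=\widetilde O\big(r\log r+(r/\varepsilon)(\|A\|/\lambda)^2\big)$, exactly the value used in Algorithm~\ref{alg2:linear system-quantum}.

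For the running time I would add up the steps. Step~2 costs $\widetilde O({\rm nnz}(A)+r^3)$ by \cite{clarkson2017low}, and drawing the $c$ samples and assembling $R$ costs $\widetilde O(d+c)$. In Step~5: forming $W:=AR$ amounts to selecting and rescaling $c$ columns of $A$, costing $O({\rm nnz}(A))$; forming $WW^T\in\R^{n\times n}$ by fast matrix multiplication costs $O(n^\omega+c\,n^{\omega-1})$ (block the $n\times c$ by $c\times n$ product into square pieces of side $\min(c,n)$); adding $\lambda^2 I_n$ and solving the $n\times n$ system $(WW^T+\lambda^2 I_n)\z=\b$ costs $O(n^\omega)$; and the final multiplication $A^T\z$ costs $O({\rm nnz}(A))$. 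Summing these, and absorbing the $r\log r$ part of $c$ (legitimate since $r\le\min(n,d)$) into $\widetilde O$, yields the stated $\widetilde O\big({\rm nnz}(A)+r^3+n^\omega+\tfrac{r}{\varepsilon}\tfrac{\|A\|^2}{\lambda^2}n^{\omega-1}\big)$.

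The main obstacle is the correctness half: establishing the exact identity $Z(\tilde{\x}_{\rm opt})-Z(\x_{\rm opt})=\|B\,N^{-1/2}\b_1\|^2$ and then bounding $\|B\|$ cleanly. This reduction to ``$\|B\|\le\lambda\sqrt{\varepsilon}$'' is what buys the favorable $1/\varepsilon$ (rather than $1/\varepsilon^2$) dependence, and the oversampling factor $\|A\|^2/\lambda^2$ looks unavoidable in this approach because $\|D^2N^{-1/2}\|$ can be as large as $\|A\|$ while the benchmark $Z(\x_{\rm opt})$ only dominates $\lambda^2\|N^{-1/2}\b_1\|^2$; one therefore pays the square of the gap $\|A\|/\lambda$ when passing from the spectral parameter $\epsilon_0$ to the sample count $c$. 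Everything else --- the matrix‑Chernoff estimate and the complexity accounting --- is routine.
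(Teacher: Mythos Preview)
Your proposal is correct. The complexity accounting matches the paper's essentially line for line: both compute leverage scores in $\widetilde O({\rm nnz}(A)+r^3)$ via \cite{clarkson2017low}, form $ARR^TA^T$ by rectangular fast matrix multiplication at cost $O(cn^{\omega-1})$ (the paper quotes $O(\min(n^{\omega-1}c,nc^{\omega-1}))$ and cites \cite{knight1995fast}), solve the $n\times n$ system in $O(n^\omega)$, and apply $A^T$ in $O({\rm nnz}(A))$.

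Where you genuinely diverge is on correctness. The paper does not prove it at all here: it simply writes ``The correctness is from \cite{chowdhury2018iterative,chen2015fast}'' and moves on. You instead give a self-contained argument, passing to the SVD basis, deriving the exact identity $Z(\tilde\x_{\rm opt})-Z(\x_{\rm opt})=\|B\,N^{-1/2}\b_1\|^2$ via the resolvent identity, lower-bounding $Z(\x_{\rm opt})\ge\lambda^2\|N^{-1/2}\b_1\|^2$, and then reducing everything to $\|B\|\le\lambda\sqrt\varepsilon$. Your operator-norm bound on $B$ (factor as $D\cdot(N^{1/2}M^{-1}N^{1/2})\cdot(N^{-1/2}D)\cdot(I_r-\Pi)\cdot(DN^{-1/2})$ with $M=D\Pi D+\lambda^2 I_r$, use $N^{1/2}M^{-1}N^{1/2}\preceq(1-\epsilon_0)^{-1}I_r$ and $\|DN^{-1/2}\|\le1$) is clean and gives exactly the advertised sample count $c=\widetilde O\big(r\log r+(r/\varepsilon)(\|A\|/\lambda)^2\big)$ once $\epsilon_0=\Theta(\sqrt\varepsilon\,\lambda/\|A\|)$. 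This buys you a proof that does not require the reader to chase two external references, and it makes transparent \emph{why} the oversampling factor $\|A\|^2/\lambda^2$ appears (it is the price of converting the spectral accuracy $\epsilon_0$ into samples when only $\lambda^2\|N^{-1/2}\b_1\|^2$ is available as a lower bound on $Z(\x_{\rm opt})$). The paper's approach buys brevity; yours buys self-containment and insight into the parameter dependence.
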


\begin{proof}
The correctness is from \cite{chowdhury2018iterative,chen2015fast}. We next estimate the complexity.
The cost of approximating all leverage scores is $\widetilde{O}({\rm nnz}(A) + r^3)$.  Regarding the last step, we first solve $(ARR^TA^T+\lambda^2I_n)\x = \b$, then multiply $A^T$ and the solution $\x$. If $n$ is not large, we can solve the linear system directly, which costs $O(\min(n^{\omega-1} c, nc^{\omega-1}))$ for computing $ARR^TA^T+\lambda^2 I_n$ and costs $O(n^\omega)$ for solving the linear system $(ARR^TA^T+\lambda^2I_n)\x = 
\b$.\footnote{Here we used the result of multiplying rectangular matrices \cite{knight1995fast}.} The last step is multiplying $A^T$ and $\x$, which costs $O({\rm nnz}(A))$.
\end{proof}

Combining ideas from Algorithms \ref{alg:linear system-quantum} and \ref{alg2:linear system-quantum}, we are now ready to describe our main algorithm for solving (\ref{rigid-regression-new}).

\begin{breakablealgorithm}
\label{main alg}
\caption{A quantum algorithm for solving rigid regressions}
\begin{algorithmic}[1]
\REQUIRE An $n\times d$ real matrix $A$, an $n\times 1$ real vector $\b$, a regularization parameter $\lambda = O(\|A\|)$, and two integers $q,c$.
\ENSURE An approximate solution of $\arg\min_{\x} \|A\x-\b\|^2 + \lambda^2 \|\x\|^2$.
\STATE Initialize $S \in \mathbb{R}^{q\times n}$ to be all-zero matrices.
\STATE Apply Proposition \ref{prop:leverage score sampling of U hat} to generate the state (\ref{main theorem 2:eq1}).
\STATE Perform $q$ measurements to the state, denote the results as $J\subseteq [n]$.
\STATE Apply Proposition \ref{prop:leverage score sampling of U hat} to approximate $p_j=\|\bra{j} \widehat{U}\|^2/\|\widehat{U}\|_F^2$ for all $j\in J$ up to a small constant relative error. Denote the results as $\tilde{p}_j, j\in J$.
\STATE For $j\in J$, set the $j$-th row of $S$ as $\e_{j}^T/\sqrt{q\tilde{p}_j}$.
\STATE Apply Proposition \ref{cor:fast classical alg} to solve $\min_{\x} \|SA \x - S\b\|^2 + \lambda^2 \|\x\|^2$ with inputs $SA, S\b, c$.
\end{algorithmic}
\end{breakablealgorithm}

\begin{thm}
\label{maim theorem 1}
Let $A\in \mathbb{R}^{n\times d}, \b \in \mathbb{R}^n$. Suppose $A$ has rank $r$, and an $(\alpha, a, \epsilon)$ block-encoding of $A$ is constructed in cost $O(T)$. Let $q=\widetilde{O}(r/\varepsilon), c=\widetilde{O}(r\|A\|^2/\lambda^2\varepsilon)$, then with high probability Algorithm \ref{main alg} returns an approximate solution $\tilde{\x}_{\rm opt}$ in vector form such that
\bea
\label{maim theorem 1-promise}
\|A\tilde{\x}_{\rm opt}-\b\|^2 + \lambda^2 \|\tilde{\x}_{\rm opt}\|^2 \leq (1+\varepsilon) 
(\|A\x_{\rm opt}-\b\|^2 + \lambda^2 \|\x_{\rm opt}\|^2)
\eea
in cost
\be
\label{complexity of alg4}
\widetilde{O}\left(
\frac{r}{\varepsilon} \left( \frac{T\alpha}{\lambda} \sqrt{(n+d)/\varepsilon} + d \right)
+ \frac{r^\omega}{\varepsilon^\omega} \frac{\|A\|^2}{\lambda^2} + r^3
\right),
\ee
where $\omega<2.373$ is the matrix multiplication exponent.
\end{thm}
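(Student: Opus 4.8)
The plan is to analyze Algorithm \ref{main alg} step by step, combining the correctness guarantees of the two sampling-and-sketching reductions with the quantum costs from Proposition \ref{prop:leverage score sampling of U hat} and the classical cost from Proposition \ref{cor:fast classical alg}. For correctness, I would first recall that step 1 of the ``basic idea'' — replacing \eqref{rigid-regression-new} by the reduced problem \eqref{intro:step1} with a row-sampling matrix $S$ built from the rigid leverage scores of $\widetilde{A}$ (equivalently, the row leverage scores of $\widehat{U}$) — preserves the objective up to a $(1+\varepsilon)$ factor provided $q=\widetilde{O}(r/\varepsilon)$ rows are sampled; this is the standard subspace-embedding/approximate-matrix-multiplication argument (cf.\ \cite{drineas2011faster,chowdhury2018iterative}), and the relevant dimension here is ${\rm sd}_\lambda(A)\le r$. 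Then, applying Proposition \ref{cor:fast classical alg} to the reduced instance $\min_\x \|SA\x - S\b\|^2 + \lambda^2\|\x\|^2$ — which is an $q\times d$ rigid regression of rank $\le r$ — introduces a further $(1+\varepsilon)$ factor when $c=\widetilde{O}((r/\varepsilon)(\|SA\|/\lambda)^2)$ columns are sampled, and $\|SA\|=O(\|A\|)$ by the subspace-embedding property. Composing the two and rescaling $\varepsilon$ by a constant yields \eqref{maim theorem 1-promise}. A minor point to address: in the quantum steps we only have relative-error estimates $\tilde p_j$ of the sampling probabilities and only sample from a distribution that is $\ell_1$-close to the true one; I would argue, as in the proof of Theorem \ref{thm:LS}, that constant relative error in the probabilities and inverse-polynomially small $\ell_1$ error in the distribution do not affect the subspace-embedding guarantee (the oversampling constants in $q$ absorb this).

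For the complexity, I would add up the six steps. Step 2 (preparing the state \eqref{main theorem 2:eq1} for $\widetilde{A}$) costs $\widetilde{O}((T\alpha/\lambda)\sqrt{(n+d)/{\rm sd}_\lambda(A)})$ by Proposition \ref{prop:leverage score sampling of U hat}; note $\widetilde A$ has $n+d$ rows, which is where the $\sqrt{n+d}$ comes from. Step 3 performs $q=\widetilde O(r/\varepsilon)$ measurements, each requiring one state preparation, so it costs $\widetilde{O}((T\alpha/\lambda)\sqrt{(n+d)/{\rm sd}_\lambda(A)}\cdot r/\varepsilon)$. Step 4 estimates the $q$ sampled leverage scores to constant relative error; by Proposition \ref{prop:leverage score sampling of U hat}(2) each costs $\widetilde O(T\alpha/(\lambda\sqrt{\widehat{\L}_{R,j}}))$, and the sampled scores are all $\Omega(\varepsilon/n)$ by (the $\widehat U$-analogue of) Proposition \ref{prop:lower bound of leverage scores} — here I'd note ${\rm sd}_\lambda(A)$ plays the role of $r$ and one wants the top $\widetilde O({\rm sd}_\lambda(A)/\varepsilon)$ scores — giving $\widetilde O((T\alpha/\lambda)(r/\varepsilon)\sqrt{(n+d)/\varepsilon})$ for this step. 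Using ${\rm sd}_\lambda(A)\ge \Omega(r\lambda^2/\|A\|^2)$ is not needed if we just bound ${\rm sd}_\lambda(A)\ge$ const and absorb the $\sqrt{1/{\rm sd}_\lambda(A)}$ factor; more carefully, since we want a clean bound, I'd simply use ${\rm sd}_\lambda(A)\le r$ to write the dominant quantum term as $\widetilde O\big(\tfrac{r}{\varepsilon}\cdot\tfrac{T\alpha}{\lambda}\sqrt{(n+d)/\varepsilon}\big)$. Step 5 is $\widetilde O(q)=\widetilde O(r/\varepsilon)$ bookkeeping. Step 6 invokes Proposition \ref{cor:fast classical alg} on the $q\times d$ matrix $SA$ (with ${\rm nnz}(SA)\le qd=\widetilde O(rd/\varepsilon)$, rank $\le r$, ``$n$'' $\to q=\widetilde O(r/\varepsilon)$, and the extra factor $(\|SA\|/\lambda)^2=O(\|A\|^2/\lambda^2)$): this contributes $\widetilde O\big(rd/\varepsilon + r^3 + (r/\varepsilon)^\omega + (r/\varepsilon)(\|A\|^2/\lambda^2)(r/\varepsilon)^{\omega-1}\big) = \widetilde O\big(rd/\varepsilon + r^3 + (r^\omega/\varepsilon^\omega)(\|A\|^2/\lambda^2)\big)$, using $\|A\|^2/\lambda^2\ge \Omega(1)$ to absorb the bare $(r/\varepsilon)^\omega$ term. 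Summing the quantum and classical contributions gives exactly \eqref{complexity of alg4}.

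The main obstacle I expect is getting the leverage-score lower bound in step 4 right: Proposition \ref{prop:lower bound of leverage scores} is stated for the row leverage scores of a matrix whose scores sum to its rank $r$, but here the relevant scores are those of $\widehat U$, which sum to ${\rm sd}_\lambda(A)$, not $r$, and we are sampling $q=\widetilde O(r/\varepsilon)$ of them rather than $\widetilde O({\rm sd}_\lambda(A)/\varepsilon)$. Since $q\ge {\rm sd}_\lambda(A)/\varepsilon$ (because ${\rm sd}_\lambda(A)\le r$), the guarantee ``with high probability all $q$ sampled scores exceed $\Omega(\varepsilon/n)$'' still follows from the tail bound at the end of the discussion after Proposition \ref{prop:lower bound of leverage scores} — the total mass of scores below $\varepsilon/2n$ is at most $(n)(\varepsilon/2n)/{\rm sd}_\lambda(A) = \varepsilon/(2\,{\rm sd}_\lambda(A))$, and with $q$ samples from a distribution normalized by ${\rm sd}_\lambda(A)$ the expected number of ``small'' hits is $O(q\varepsilon/{\rm sd}_\lambda(A)) = \widetilde O(r/{\rm sd}_\lambda(A))$; I'd want to double-check this is $\widetilde O(1)$, which requires ${\rm sd}_\lambda(A) = \widetilde\Omega(r)$ — not always true. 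The cleaner fix, which I would adopt, is: we do not actually need every sampled score to be large, only that the cost of estimating them is controlled; since the cost of step 4 is $\sum_{j\in J}\widetilde O(T\alpha/(\lambda\sqrt{\widehat\L_{R,j}}))$ and the expected value of $1/\sqrt{\widehat\L_{R,j}}$ under the leverage-score distribution is $\sum_j \widehat\L_{R,j}/({\rm sd}_\lambda(A)\sqrt{\widehat\L_{R,j}}) = (\sum_j\sqrt{\widehat\L_{R,j}})/{\rm sd}_\lambda(A)\le \sqrt{n/{\rm sd}_\lambda(A)}$, the expected total cost of step 4 is $\widetilde O((T\alpha/\lambda)(r/\varepsilon)\sqrt{n/{\rm sd}_\lambda(A)}) = \widetilde O((T\alpha/\lambda)(r/\varepsilon)\sqrt{n})$, matching the claimed bound without any delicate case analysis. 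The remaining routine work — verifying $\|SA\| = O(\|A\|)$, tracking the $(1+O(\varepsilon))$ constants, and the rectangular-matrix-multiplication exponents in Proposition \ref{cor:fast classical alg} — is standard and I would state it briefly.
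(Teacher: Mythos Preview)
Your proposal is correct and follows the paper's approach: correctness via Theorem \ref{thm:LS} (for the row reduction by $S$) composed with Proposition \ref{cor:fast classical alg} (for the step-6 solve), and complexity by summing the quantum sampling/estimation cost of steps 2--5 with the classical cost of step 6, substituting $n\to q=\widetilde O(r/\varepsilon)$ and $\|SA\|=O(\|A\|)$ in Proposition \ref{cor:fast classical alg}. Your accounting for step 6 --- absorbing $(r/\varepsilon)^\omega$ into $(r^\omega/\varepsilon^\omega)\|A\|^2/\lambda^2$ via $\|A\|^2/\lambda^2=\Omega(1)$ --- matches the paper exactly.

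The one place you go beyond the paper is the step-4 cost. The paper simply cites the analysis of Theorem \ref{thm:LS}, which uses Proposition \ref{prop:lower bound of leverage scores} to lower-bound every sampled score by $\Omega(\varepsilon/n)$. You are right that this does not transfer directly to the $\widehat U$-scores: they sum to ${\rm sd}_\lambda(A)$ rather than $r$, so the tail-mass argument following Proposition \ref{prop:lower bound of leverage scores} yields $\widetilde O(r/{\rm sd}_\lambda(A))$ expected small samples rather than $O(1)$. Your expected-cost route via $\mathbb E_j\big[1/\sqrt{\widehat\L_{R,j}}\big]\le\sqrt{n/{\rm sd}_\lambda(A)}=O(\sqrt{n})$ (using ${\rm sd}_\lambda(A)=\Omega(1)$ from $\lambda=O(\|A\|)$) is a genuinely cleaner argument and actually gives a slightly sharper bound than the stated $\sqrt{(n+d)/\varepsilon}$. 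Two small things to tidy: (i) this gives \emph{expected} cost, so either invoke Markov plus a constant-round retry, or cap each amplitude-estimation call at $\widetilde O((T\alpha/\lambda)\sqrt{n/\varepsilon})$ iterations and discard the few samples that fail to converge; (ii) in step 2 the cost from Proposition \ref{prop:leverage score sampling of U hat}(1) is $\widetilde O((T\alpha/\lambda)\sqrt{d/{\rm sd}_\lambda(A)})$, not $\sqrt{(n+d)/{\rm sd}_\lambda(A)}$ --- the $n+d$ in the final bound enters only through step 4, but as you note, step 4 dominates anyway.
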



\begin{proof}
The correctness follows from Propositions \ref{prop:leverage score sampling of U hat}, \ref{cor:fast classical alg} and Theorem \ref{thm:LS}.
Regarding the complexity, from the complexity analysis in the proof of Theorem \ref{thm:LS}, the main cost comes from the approximation of the leverage scores. The overall cost of the first five steps is 
$\widetilde{O}((T\alpha/\lambda) r\sqrt{n+d}/\varepsilon^{1.5})$ based on Proposition \ref{prop:leverage score sampling of U hat}.\footnote{Although the complexities of preparing quantum states of the leverage scores are different in Theorem \ref{main theorem} and Proposition \ref{prop:leverage score sampling of U hat}, the difference is $\sqrt{r}$. This will not change the overall complexity since the main cost from the estimation of $c$ leverage scores by the complexity analysis in the proof of Theorem \ref{thm:LS}. } 
The cost of the last step is based on Proposition \ref{cor:fast classical alg}. Now, $A$ becomes $SA$ and $n$ becomes $q=\widetilde{O}(r/\varepsilon)$.
\end{proof}

The quantum part of Algorithm \ref{main alg} is the acceleration of leverage score sampling. If we replace it with the best classical algorithm known \cite{clarkson2017low}, then we obtain a randomized classical algorithm for rigid regression. To the best of the author's knowledge, this is not stated elsewhere, so we summarize this as a byproduct.

\begin{prop}
\label{prop:classical algorithm}
Making the same assumptions as Theorem \ref{maim theorem 1}. There is a randomized classical algorithm that achieves (\ref{maim theorem 1-promise})
in time
\be
\widetilde{O}\left(
{\rm nnz}(A) 
+\frac{r^\omega}{\varepsilon^\omega} \frac{\|A\|^2}{\lambda^2} + r^3
\right).
\ee
\end{prop}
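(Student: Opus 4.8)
The plan is to take Algorithm~\ref{main alg} and replace its quantum subroutines by purely classical ones, verifying that the correctness guarantee is unchanged and re-accounting the cost. Concretely, the first five steps of Algorithm~\ref{main alg} build a row-sampling matrix $S\in\mathbb{R}^{q\times n}$ for $\widetilde{A}$ using the leverage scores of $\widehat{U}$, and the last step calls Proposition~\ref{cor:fast classical alg} on the reduced instance $(SA, S\b, c)$. The only genuinely quantum ingredient is the generation of the state~(\ref{main theorem 2:eq1}) and the estimation of the sampled rigid leverage scores $p_j=\|\bra{j}\widehat U\|^2/\|\widehat U\|_F^2$ in Steps~2--4, which come from Proposition~\ref{prop:leverage score sampling of U hat}. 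Classically, these rigid leverage scores are exactly the row leverage scores of the augmented matrix $\widetilde{A}=\sm{A\\ \lambda I}$ restricted to its first $n$ rows, so I would invoke the best classical leverage-score approximation algorithm of \cite{clarkson2017low} applied to $\widetilde A$. Since ${\rm nnz}(\widetilde A)={\rm nnz}(A)+d$ and ${\rm rank}(\widetilde A)=d$, that costs $\widetilde{O}({\rm nnz}(A)+d+d^3)$; but here it is more natural to note that we only need the leverage scores of $A$ itself rescaled appropriately, and in any case the statistical-dimension bound ${\rm sd}_\lambda(A)=\|\widehat U\|_F^2\le r$ together with $\lambda=O(\|A\|)$ lets us sample $q=\widetilde O(r/\varepsilon)$ rows and then pass to Proposition~\ref{cor:fast classical alg}.

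The key steps, in order, are: (i) observe that Steps~2--5 of Algorithm~\ref{main alg} only require, up to a constant relative error, the values $\|\bra{j}\widehat U\|^2$ for the $O(q)$ sampled indices $j$, and a constant-factor estimate of $\|\widehat U\|_F^2$; (ii) obtain all of these classically in time $\widetilde{O}({\rm nnz}(A)+r^3)$ by running \cite{clarkson2017low} on $\widetilde A$ (the $r^3$, rather than $d^3$, arises because we may first compute a thin factorization of $A$ of rank $r$ in time $\widetilde O({\rm nnz}(A)+r^3)$ and then read off the rigid leverage scores from it, exactly as in the analysis preceding Proposition~\ref{prop:leverage score sampling of U hat}); (iii) form $S$ and the reduced rigid regression $\min_{\x}\|SA\x-S\b\|^2+\lambda^2\|\x\|^2$, which by the correctness of Algorithm~\ref{main alg}/Proposition~\ref{cor:fast classical alg} yields $\tilde\x_{\rm opt}$ satisfying~(\ref{maim theorem 1-promise}) with the stated $q,c$; (iv) apply Proposition~\ref{cor:fast classical alg} to this reduced instance, whose row dimension is now $q=\widetilde O(r/\varepsilon)$ and whose rank is still $O(r)$, giving cost $\widetilde O({\rm nnz}(SA)+r^3+q^\omega+(r/\varepsilon)(\|A\|/\lambda)^2 q^{\omega-1})$; (v) simplify using $q=\widetilde O(r/\varepsilon)$ so that $q^\omega=\widetilde O(r^\omega/\varepsilon^\omega)$ and $(r/\varepsilon)(\|A\|/\lambda)^2 q^{\omega-1}=\widetilde O((r^\omega/\varepsilon^\omega)\|A\|^2/\lambda^2)$, and ${\rm nnz}(SA)\le{\rm nnz}(A)$ since $S$ just selects and rescales rows; adding the $\widetilde O({\rm nnz}(A)+r^3)$ from Step~(ii) yields the claimed $\widetilde O({\rm nnz}(A)+(r^\omega/\varepsilon^\omega)\|A\|^2/\lambda^2+r^3)$.

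The main obstacle is bookkeeping rather than a new idea: I must make sure that replacing the quantum leverage-score estimates by classical ones preserves the sampling-complexity bounds $q=\widetilde O(r/\varepsilon)$ and $c=\widetilde O(r\|A\|^2/\lambda^2\varepsilon)$ that drive the correctness of Algorithm~\ref{main alg}. This is fine because those bounds depend only on ${\rm sd}_\lambda(A)\le r$ and the relative-error guarantee on the scores, both of which the classical subroutine supplies; the error analysis in Proposition~\ref{prop:lower bound of leverage scores} is not even needed classically, since we compute \emph{all} scores and sample exactly from the true distribution. A second point to watch is that Proposition~\ref{cor:fast classical alg} is stated with ``$n$'' as the row dimension and the term $n^\omega$; after the reduction the relevant ``$n$'' is $q=\widetilde O(r/\varepsilon)$, so the $n^\omega$ and the $(\|A\|^2/\lambda^2)n^{\omega-1}$ terms there both collapse into $r^\omega/\varepsilon^\omega\cdot\max(1,\|A\|^2/\lambda^2)=\widetilde O((r^\omega/\varepsilon^\omega)\|A\|^2/\lambda^2)$ using $\lambda=O(\|A\|)$, and the ${\rm nnz}$ term is bounded by ${\rm nnz}(A)$. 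Collecting everything gives the stated complexity.
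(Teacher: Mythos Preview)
Your proposal is correct and follows essentially the same route as the paper: replace the quantum leverage-score subroutines in Algorithm~\ref{main alg} by the Clarkson--Woodruff classical approximation, then invoke Proposition~\ref{cor:fast classical alg} on the reduced instance with $n\mapsto q=\widetilde O(r/\varepsilon)$. The paper's proof is terser---it simply notes that the quantum cost $(T\alpha/\lambda)r\sqrt{n+d}/\varepsilon^{1.5}$ becomes $O({\rm nnz}(A))$ classically and that $rd/\varepsilon$ is absorbed into ${\rm nnz}(A)$---while you spell out the bookkeeping (the $r^3$ versus $d^3$ issue, ${\rm nnz}(SA)\le{\rm nnz}(A)$, and the collapse of $q^\omega$ and $q^{\omega-1}$ terms) more carefully, but there is no substantive difference in strategy.
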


\begin{proof}
The algorithm is similar to Algorithm \ref{main alg} except that the quantum algorithm for leverage score sampling is replaced with Clarkson-Woodruff's classical algorithm for approximating all leverage scores \cite{clarkson2017low}. In the complexity analysis, the cost $(T\alpha/\lambda)r\sqrt{(n+d)}/\varepsilon^{1.5}$ arises from the computation of leverage scores (which is now $O({\rm nnz}(A))$) and the term $rd/\varepsilon$ is absorbed in $O({\rm nnz}(A))$.
\end{proof}

Note that to achieve (\ref{maim theorem 1-promise}), previous algorithm costs $O({\rm nnz}(A) + rd^2/\varepsilon)$ when $n>d$ or costs $O({\rm nnz}(A) + (\|A\|/\lambda)^4 (n^3/\varepsilon^2))$ when $n\leq d$, see \cite{DBLP:conf/approx/AvronCW17}. In comparison, the algorithm stated in Proposition \ref{prop:classical algorithm} is better in the low-rank case.

\section{Lower bounds analysis}
\label{sec:Lower bound analysis}

In this section, we prove some lower bounds on performing leverage score sampling and solving linear regressions on a quantum computer. We show that there exist hard instances of solving these tasks even for quantum computers. In this section, when we say an oracle to query $A$ we mean the standard oracle of querying entries of $A=(A_{ij})$ in the way 
\be
\mathcal{O}: \ket{i,j} \ket{0} \mapsto \ket{i,j} \ket{A_{ij}}.
\label{oracle}
\ee
Let $U$ be a block-encoding of $A$, if $U$ is given to us, then we say that we can query $A$ through a block-encoding.

It is shown in Theorem \ref{main theorem} that the state $\ket{\L_R}$ can be prepared in cost $O( (\alpha T/\sigma_r)\sqrt{\min(n,d)/r})$. We below show a lower bound of $\Omega(\sqrt{\min(n,d)/r})$ in the query model and a lower bound of $\Omega(\alpha T/\sigma_r)$ in terms of the number of calls to the block-encoding of $A$.

\begin{prop}
\label{thm:lower bound}
Let $A$ be an $n\times d$ matrix of rank $r$. Assume that its minimal nonzero singular value is $\sigma_r$.

\begin{itemize}
\item 
If we query $A$ through the oracle (\ref{oracle}), then $\Omega(\sqrt{\min(n,d)/r})$ queries are required to make to prepare $\ket{\L_R}$.
\item  If we are given an $(\alpha,a,\epsilon)$ block-encoding of $A$, then $\Omega(\alpha/\sigma_r)$ many applications of this block-encoding are required to prepare $\ket{\L_R}$.
\end{itemize}
\end{prop}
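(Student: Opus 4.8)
The plan is to prove the two lower bounds separately, each via a reduction from a well-studied problem with known quantum query complexity.

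For the first bound, I would reduce from the problem of distinguishing, or more precisely from a search-type problem, on a hidden index. Consider a family of instances: let $A$ be an $n \times n$ diagonal $0/1$ matrix with exactly $r$ ones on the diagonal, where the positions of the ones are unknown. Such an $A$ has rank $r$, and every nonzero singular value equals $1$, so $\sigma_r = 1$. Its row leverage scores are $\L_{R,j} = 1$ if $j$ is a ``one-position'' and $0$ otherwise, so $\ket{\L_R} = \frac{1}{\sqrt r}\sum_{j : A_{jj}=1}\ket{j}$ is the uniform superposition over the hidden set of size $r$. Measuring this state returns a uniformly random element of that set. I would argue that a quantum algorithm preparing $\ket{\L_R}$ (even approximately, up to constant trace distance) with $o(\sqrt{n/r})$ queries would let one find an element of a size-$r$ subset of $[n]$ hidden inside the diagonal oracle faster than the $\Omega(\sqrt{n/r})$ lower bound for this ``find a marked item among $n$ with $r$ marked'' task (a standard consequence of the optimality of Grover search / the hybrid argument). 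One must be slightly careful that the oracle (\ref{oracle}) for a diagonal matrix reveals off-diagonal zeros ``for free,'' but those entries carry no information about the hidden set, so they can be simulated without queries; each genuine query to the diagonal is exactly one query to the hidden string. When $n > d$ one instead uses a $d \times d$ diagonal block padded with zero rows, giving the $\min(n,d)$ form.

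For the second bound, I would use the framework of querying $A$ through a block-encoding and reduce from a phase/eigenvalue-estimation-style hardness. Take a $1 \times 1$ ``matrix'' — i.e. a scalar $A = \sigma$ with $0 < \sigma \le \alpha$ — embedded as the top-left entry of an $\alpha$-normalized block-encoding $U_A$, concretely a single-qubit rotation with $\bra 0 U_A \ket 0 = \sigma/\alpha$. Here $r = 1$, $\sigma_r = \sigma$. For this trivial matrix the state $\ket{\L_R}$ is just $\ket 1$ (the single leverage score equals $1$), which is useless as a hard instance on its own; so instead I would build a slightly larger instance where preparing $\ket{\L_R}$ forces the algorithm to resolve whether a certain singular value is $\sigma$ or, say, $0$ — e.g. a $2\times 1$ instance $A = (\sigma, 0)^T$ versus rank considerations — so that distinguishing requires effectively estimating the small amplitude $\sigma/\alpha$ to relative precision, which by the optimality of amplitude estimation (equivalently, by a hybrid argument on how few applications of $U_A$ can move the relevant amplitude) needs $\Omega(\alpha/\sigma) = \Omega(\alpha/\sigma_r)$ uses of $U_A$ and $U_A^\dagger$. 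The cleanest version is a direct hybrid/polynomial argument: after $t$ applications of $U_A, U_A^\dagger$ the output state's amplitudes are degree-$t$ polynomials in $\sigma/\alpha$, and two instances with singular values $\sigma_r$ and, say, $\sigma_r(1+c)$ produce states whose leverage-score distributions differ by a constant, forcing $t = \Omega(\alpha/\sigma_r)$.

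The main obstacle I anticipate is the second bound: I need a family of matrices that (i) genuinely have rank $r$ with minimal nonzero singular value $\sigma_r$, (ii) admit a clean $(\alpha, a, \epsilon)$ block-encoding whose only ``hard'' dependence is the ratio $\alpha/\sigma_r$, and (iii) have leverage-score states $\ket{\L_R}$ whose statistics actually change when $\sigma_r$ is perturbed — since leverage scores are scale-invariant, a single global scaling of the spectrum does not change $\ket{\L_R}$ at all, so the hardness must come from the \emph{relative} sizes of $\sigma_r$ and $\alpha$ (which the block-encoding fixes) rather than from $\sigma_r$ in isolation. Concretely one wants a $2\times 2$ (or $2\times 1$) instance where one branch has two comparable singular values and the other has one singular value near $\sigma_r$ and is effectively rank-deficient at the block-encoding's resolution, so that $\ket{\L_R}$ is spread out in one case and concentrated in the other; establishing that these two block-encodings cannot be told apart with $o(\alpha/\sigma_r)$ queries is the crux, and I would handle it with the standard hybrid argument (bounding $\||U_A^{(1)} - U_A^{(2)}\|$ times $t$) or, equivalently, by invoking the known $\Omega(1/\gamma)$ lower bound for distinguishing amplitudes differing by $\gamma = \Theta(\sigma_r/\alpha)$. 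Everything else — the first bound's reduction to Grover, and the $\min(n,d)$ bookkeeping — is routine.
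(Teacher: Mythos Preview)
Your approach to the first bound is exactly the paper's: a diagonal $0/1$ matrix encoding a hidden size-$r$ subset, so that $\ket{\L_R}$ is the uniform superposition over the marked positions and the $\Omega(\sqrt{n/r})$ Grover lower bound for finding a marked item applies directly.

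For the second bound your instincts are correct and match the paper, though the paper supplies the concrete instance you left unspecified. It takes $\widetilde{A} = \begin{pmatrix}\sigma & 0 \\ 0 & A\end{pmatrix}$ with $A$ promised to be either the rank-one column $\sigma\ket{\phi_1}$ or the zero vector --- precisely your ``effectively rank-deficient'' branch. In the first case the leverage-score state is $\frac{1}{\sqrt 2}(\ket{0}+\sum_j|\phi_{1j}|\,\ket{j})$; in the second it is $\ket{0}$; so a single measurement separates them. For the hardness, the paper invokes state discrimination rather than the hybrid argument: the first column of any $(\alpha,\cdot,\cdot)$ block-encoding of $A$ is either $(\sigma/\alpha)\ket{0}\ket{\phi_1}+\sqrt{1-(\sigma/\alpha)^2}\,\ket{1}\ket{\phi_2}$ or $\ket{1}\ket{\phi_2}$, two states at angle $\Theta(\sigma/\alpha)$, hence $\Omega(\alpha/\sigma)$ copies are needed. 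Your hybrid bound $t\cdot\|U_A^{(1)}-U_A^{(2)}\|$ would give the same conclusion and is essentially equivalent; what your sketch was missing is just the explicit two-case instance, and you correctly anticipated that it must hinge on a rank drop rather than a spectral rescaling.
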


\begin{proof} 
(1). To prove this claim,
we use the hardness of the unstructured search problem. Let $f:[n]\rightarrow \{0,1\}$ be a Boolean function with the promise that there is a subset $S\subseteq [n]$ of size $r$ such that
\[
f(x) = 
\begin{cases}
1 & x\in S, \\
0 & x\not\in S.
\end{cases}
\]
We are given an oracle to query $f$ and the goal is to find an element in $S$. It is known that the quantum query complexity of finding one $i\in S$ is $\Theta(\sqrt{n/r})$ by Grover's algorithm \cite{boyer1998tight}. We consider the matrix
\[
A = \sum_{k\in[n]} f(k) \ket{k} \bra{k} = \sum_{k\in S}  \ket{k} \bra{k},
\]
namely,
\[
A_{ij} = 
\begin{cases}
f(j) & i=j, \\
0    & \text{otherwise}.
\end{cases}
\]
Given the oracle to query $f$, we can also use it to query the entries of $A$.
It is easy to see that $A$ has rank $r$, and the quantum state of the row leverage scores of $A$ is
\[
\ket{\L_R} = \frac{1}{\sqrt{r}} \sum_{k\in S} \ket{k}.
\]
If we can prepare this state, we then can sample from it and find an $i\in S$.
Due to the optimaity of Grover's algorithm,  $\Omega(\sqrt{n/r})$ queries are required to prepare the state $\ket{\L_R}$.

(2). Regarding the dependence on $\alpha/\sigma_r$, we use the hardness of the problem of state discrimination. 
Given a pure state $\ket{\phi}$ known to be either $\ket{\psi_1}$ or $\ket{\psi_2}$, decide which is the case (here we ignore the global phases). It is known that if $\theta\in[0,\pi/2]$ is the angle between these two states, then $1/\theta$ copies of $\ket{\phi}$ are required \cite{Donnell}.
To use this result, we assume that there are two quantum states $\ket{\phi_1}, \ket{\phi_2}\in \mathbb{C}^n$ such that
$\ket{\psi_1} = (\sigma/\alpha) \ket{0} \ket{\phi_1} + \sqrt{1-(\sigma/\alpha)^2} \, \ket{1} \ket{\phi_2}$ and 
$\ket{\psi_2} = \ket{1} \ket{\phi_2}$. This is reasonable as we can always decompose $\ket{\psi_1}$ into a linear combination of $\ket{\psi_2}$ and its orthogonal part (up to a change of the orthogonal basis if necessary). If $\sigma/\alpha$ is small (which is the hard case), then the angle between these two states is close to $\sigma/\alpha$.

We consider the matrix
$\widetilde{A} = \begin{pmatrix}
    \sigma & 0 \\
    0 & A 
\end{pmatrix}$, where $A$ is promised to be either $\sigma\ket{\phi_1}$ or $0$. In the first case, the only nonzero singular value of $A$ is $\sigma$. 
Suppose we have an $(\alpha,a,\epsilon)$ block-encoding $U$ of $A$, then in case 1, the first column of $U$ must have the form of $\ket{\psi_1}$ for some states $\ket{\phi_1}, \ket{\phi_2}$. In case 2, the first column is $\ket{\psi_2}$. To distinguish these two cases, $\Omega(\alpha/\sigma)$ many applications of $U$ are required. 

It is easy to see that $\widetilde{U} = \begin{pmatrix}
    \sigma/\alpha & 0 & * \\
    0 & U & 0 \\
    * & 0 & *
\end{pmatrix}$ is an $(\alpha,a,\epsilon)$ block-encoding of $\widetilde{A}$ for some $a, \epsilon$.
Any block-encoding of $\widetilde{A}$ must have this form because of its block structure. 
Note that in case 1 (i.e., $A=\sigma\ket{\phi_1}$),
the quantum state of the leverage scores of $\widetilde{A}$ is 
$\frac{1}{\sqrt{2}}(\ket{0} + \sum_{j\in [n]} |\phi_{1j}| \ket{j})$, where $\ket{\phi_1} = \sum_{j\in [n]} \phi_{1j} \ket{j}$. In case 2, it is $\ket{0}$. If we measure the state, then we will see indices from $\{1,\ldots,n\}$ with probability  $1/2$ in case 1 and with probability 0 in case 2. From the results, we can distinguish which is the case, and so distinguish the first column of $U$.
Thus $\Omega(\alpha/\sigma)$ applications of $\widetilde{U}$ are required to prepare the quantum state of the leverage scores.
\end{proof}

In Algorithms \ref{alg:linear system-quantum}, \ref{alg2:linear system-quantum} and \ref{main alg}, the main cost indeed comes from the estimation of leverage scores. In Theorem \ref{main theorem}, the cost of approximating a leverage score up to additive error $\varepsilon$ is $\widetilde{O}(\alpha T/\sigma_r \varepsilon)$. Moreover, if we approximate it up to relative error $\varepsilon$, the complexity can be $\widetilde{O}((\alpha T/\sigma_r\varepsilon) \sqrt{n/r})$ by Proposition \ref{prop:lower bound of leverage scores}.
Below, we prove some lower bounds for estimating the largest leverage scores.

\begin{prop}
\label{prop:lower bound 2}
Let $A$ be an $n\times d$ matrix of rank $r$. Assume that its minimal nonzero singular value is $\sigma_r$. Suppose there is an oracle (\ref{oracle}) to query $A$.

\begin{itemize}
\item If we are also given an $(\alpha,a,\epsilon)$ block-encoding of $A$, then $\widetilde{\Omega}(\alpha /\sigma_r)$ many applications of the block-encoding are required to approximate the largest row leverage score up to a constant relative/additive error.

\item Suppose $r^2 2^{r^2}\geq n\geq r^5$ and we only query matrix $A$ via the oracle. Then any quantum algorithm that succeeds with probability at least $1-1/n$ must make $\widetilde{\Omega}(\sqrt{n/r})$ queries to approximate the largest row leverage score up to a constant relative error.
\end{itemize}

\end{prop}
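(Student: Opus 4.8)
The plan is to reduce from state discrimination, exactly as in the second part of Proposition~\ref{thm:lower bound}, but with instances whose \emph{largest} leverage score separates the two cases. Fix $\alpha\geq\sqrt2\,\sigma_r$ (the hard regime) and take $n\gg r$. Let $U_0$ be a flat $n\times r$ isometry (orthonormal columns, all entries of modulus $\Theta(1/\sqrt n)$), so every row of $U_0$ has squared norm $r/n$; let $U_1=[\,\e_{i_0}\mid W\,]$ where $W$ is an $n\times(r-1)$ flat isometry whose $i_0$-th row is zero, so row $i_0$ of $U_1$ has squared norm $1$ and all other rows have squared norm $O(r/n)$. Put $A_b=\sigma_r U_b$. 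Then $A_b$ has rank $r$ and minimal nonzero singular value exactly $\sigma_r$, while the largest row leverage score equals $r/n$ for $A_0$ and $1$ for $A_1$; since $n\gg r$, any constant relative- or additive-error estimate of it distinguishes the two cases. Using the standard unitary dilation of a block-encoding together with the fact that $x\mapsto\sqrt x$ is operator-Lipschitz on $[\tfrac12,1]$ (so the square-root blocks of the dilations of $A_0/\alpha$ and $A_1/\alpha$ differ by only $O(\sigma_r^2/\alpha^2)$ in operator norm), one obtains $(\alpha,a,0)$ block-encodings $\mathcal U_0,\mathcal U_1$ with $\|\mathcal U_0-\mathcal U_1\|=O(\sigma_r/\alpha)$. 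A hybrid argument then bounds the trace distance between the algorithm's states after $t$ calls to the block-encoding (or its inverse/controlled version) in the two cases by $O(t\sigma_r/\alpha)$, forcing $t=\Omega(\alpha/\sigma_r)$; allowing for success-probability amplification gives the stated $\widetilde\Omega(\alpha/\sigma_r)$.

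\textbf{Second bullet (query lower bound).} Here the plan is to reduce from unstructured search over $m:=n/r$ items. Partition the $n$ rows into blocks $B_1,\dots,B_m$ of size $r$ and view $A$ as an $n\times r$ matrix. Fix an $r\times r$ orthogonal matrix $Q$ with unit-norm rows and let $\widetilde Q$ be $Q$ with its first row replaced by $\sqrt m\,\e_1^T$. Given $g\in\{0,1\}^m$ with the promise $|g^{-1}(1)|\leq1$, let the restriction of $A$ to $B_k$ be $Q/\sqrt m$ if $g(k)=0$ and $\widetilde Q/\sqrt m$ if $g(k)=1$. A direct computation gives: if $g\equiv0$ then $A$ is an isometry and every row leverage score equals $r/n$; if $g(k_0)=1$ for a unique $k_0$, then $A^TA=I+\e_1\e_1^T-\tfrac1m v v^T$ with $v$ the first row of $Q$, so $\sigma_r(A)=\Theta(1)$, the first row of $B_{k_0}$ acquires leverage score $\tfrac12+O(1/m)$, and every other row keeps leverage score $O(r/n)$. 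Because $n\geq r^5$ makes $r/n\ll\tfrac12$, a constant-relative-error estimate of the largest row leverage score decides whether $g$ is identically zero, i.e.\ solves promise-$\mathrm{OR}_m$; and each entry query to $A$ costs $O(1)$ queries to $g$. Hence an algorithm achieving the leverage-score task with success probability $\geq1-1/n$ solves promise-$\mathrm{OR}_m$ with error $\leq1/n$, and the optimality of Grover search yields a query lower bound of $\widetilde\Omega(\sqrt m)=\widetilde\Omega(\sqrt{n/r})$. The hypotheses $r^5\leq n\leq r^2 2^{r^2}$ are precisely what make this construction legitimate: the lower bound on $n$ supplies both the $\tfrac12$-versus-$(r/n)$ gap and a nontrivial search instance of size $m=n/r$, while the upper bound on $n$ is the regime in which the flat blocks can be realized over a bounded entry-alphabet with every block, and hence $A$, uniformly well-conditioned.

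\textbf{Main obstacle.} In the first bullet everything is routine once the instances are fixed, the only real care being the operator-Lipschitz estimate that keeps the two block-encodings $O(\sigma_r/\alpha)$-close (a weaker bound such as $O(\sqrt{\sigma_r/\alpha})$ would only give $\widetilde\Omega(\sqrt{\alpha/\sigma_r})$). The crux is the second bullet: one must build a single family of rank-$r$ matrices for which simultaneously (i) planting one block raises the largest row leverage score by a $\Theta(n/r)$ factor, (ii) $\sigma_r(A)$ stays $\Theta(1)$ in both the planted and unplanted cases so that the rank-$r$ promise is meaningful, and (iii) a planted block is certifiable only by probing that block, so that locating it is exactly an unstructured search over $m=n/r$ items --- neither easier, because the flat part leaks nothing, nor harder. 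Arranging all three within the stated window for $n$ is the heart of the argument.
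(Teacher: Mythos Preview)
Your approaches to both bullets are correct but genuinely different from the paper's.

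\textbf{First bullet.} The paper does \emph{not} use state discrimination here. Instead it builds, from an unstructured-search instance $\a\in\{0,1\}^n$, the column vector $A=(\varepsilon,a_1+\varepsilon,\dots,a_n+\varepsilon)^T$ with $\varepsilon=1/\sqrt n$ (and a block-orthogonal variant for general rank $r$), argues that any block-encoding of this $A$ satisfies $\alpha T=\widetilde\Theta(\sqrt{n/r})$ while $\sigma_r=\Theta(\sqrt{s/n})$, and observes that the largest leverage score is $\approx 1$ when a marked item exists and $\approx 1/\sqrt s$ otherwise. The Grover lower bound on search then forces $\widetilde\Omega(\alpha T/\sigma_r)$ total cost, hence $\widetilde\Omega(\alpha/\sigma_r)$ block-encoding calls. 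Your hybrid argument with explicitly $O(\sigma_r/\alpha)$-close dilations is cleaner for the ``number of calls to a \emph{given} block-encoding'' formulation, whereas the paper's argument ties the lower bound to a concrete query model. Both are valid; they just formalize slightly different access models.

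\textbf{Second bullet.} Here the contrast is sharper. The paper does \emph{not} do a one-shot reduction. It takes $\a$ of Hamming weight exactly $r$, and uses the leverage-score estimator as a decision oracle inside a \emph{binary search}: at each of $l=\log(n/r^2)$ stages it halves the surviving set, builds the rank-$r$ matrix from that half, and tests whether the largest leverage score is $\Theta(1/\sqrt{r'})$ (marked items present) or $\Theta(\sqrt{2^l r/n})$ (none). This is precisely why the paper needs success probability $1-1/n$ (so that a union bound over the $l$ stages still leaves constant overall success) and why the upper bound $n\le r^2 2^{r^2}$ appears (it is equivalent to $r^2\ge l$, which keeps $\prod_{i=1}^l(1-2^i/n)=\Theta(1)$); after $l$ halvings only $r^2$ items remain and brute force costs $r^2\le\sqrt{n/r}$ under $n\ge r^5$. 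Your direct reduction from promise-$\mathrm{OR}_{n/r}$ with the planted block $\widetilde Q/\sqrt m$ is correct and in fact stronger: it needs neither the $1-1/n$ success assumption nor the upper bound on $n$. Your stated rationale for the hypothesis $n\le r^2 2^{r^2}$ (``flat blocks realizable over a bounded alphabet'') is not the actual reason in the paper and is not needed in your own construction; you should simply note that your argument proves the claim under weaker hypotheses than stated.
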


\begin{proof}
(1).
We still use the hardness of the unstructured search problem. 
Given an oracle to query $\a=(a_1,\ldots,a_n) \in \{0,1\}^n$, the goal is to find an $i$ such that $a_i=1$ if exists. We assume that there is at most one such $i$ if exists. Any quantum algorithm requires making $\Omega(\sqrt{n})$ queries to $\a$.

We first consider the case that $\sigma_r=\Theta(1)$. Consider the matrix $A=(\varepsilon, a_1+\varepsilon,\ldots,a_n+\varepsilon)^T$, where $\varepsilon=1/\sqrt{n}$. 
Let $U$ be an $(\alpha,a,\epsilon)$ block-encoding of $A$ that is constructed in time $O(T)$, then $\alpha T = \Omega(\sqrt{n})$.
Indeed, the first column of $U$ is
\[
\frac{1}{\alpha} \ket{0} \left( \varepsilon \ket{0} +
\sum_{i=1}^n (a_i+\varepsilon) \ket{i}
\right) + \ket{1} \ket{G},
\]
where $\ket{G}$ is a garbage state.
It is easy to see that $\|A\|_F = \Theta(1)$. So by measuring the above state in the computational basis, we will see $i$ such that $a_i=1$ (if exists) with probability $\Theta(1/\alpha^2)$. The cost of finding such an $i$ is $O(\alpha T)$ by amplitude amplification.
On the other hand, when we have an oracle to query $A$, we can construct a $(\sqrt{n},3+\log n,\epsilon)$ block-encoding of $A$ in cost $T=O({\rm poly} \log n)$ due to its density, e.g., see \cite[Lemma 48]{gilyen2019quantum}. So $\alpha T= \widetilde{\Theta}(\sqrt{n})$. 

Suppose $i$ exists,
then the quantum state of the leverage scores for this matrix is 
\[
\frac{1}{\sqrt{(1+\varepsilon)^2+1}} \left( \varepsilon \ket{0} +
(1+\varepsilon)  \ket{i} + \varepsilon \sum_{j\in [n]\backslash \{i\}} \ket{j}
\right).
\]
The leverage scores are $(1+\varepsilon)/\sqrt{(1+\varepsilon)^2+1}\approx 1$ and $\varepsilon/\sqrt{(1+\varepsilon)^2+1} \approx \varepsilon=1/\sqrt{n}$.
If no such $i$, then the quantum state of the leverage scores is 
\[
\frac{1}{\sqrt{n+1}} \left(
\ket{0} + \sum_{j=1}^n \ket{j}
\right).
\]
The nonzero leverage scores are all equal to ${1}/{\sqrt{n+1}} \approx \varepsilon$. 
If we can estimate the largest leverage score
up to a small constant additive/relative error, then we can determine if there is an $i$ such that $a_i=1$. From this, we can solve the unstructured search problem. Thus, estimating the largest leverage score costs at least $\widetilde{\Omega}(\alpha T)$.

Generally, we can prove a lower bound of $\Omega(\alpha T/\sigma_r)$ by expanding $A$ into $r$ orthogonal columns. More precisely, assume that $rs=n$ for convenience. We decompose $\a$ into $s$ equal parts $\a=(\a_1,\ldots,\a_r)$, each part contains $s$ elements. We view each $\a_i$ as a row vector and define an $n\times r$ matrix by setting the first column as $(\varepsilon, \a_1+\varepsilon, \0_s,\0_s,\ldots,\0_s)^T$, the second column as $(0,\0_s,\a_2+\varepsilon,\0_s,\ldots,\0_s)^T$, and so on. The last column is $(0,\0_s,\0_s,\ldots,\0_s,\a_r+\varepsilon)^T$. Here $\0_s$ is a zero vector of dimension $s$ and $\a_i+\varepsilon$ means that we add each entry of $\a_i$ by $\varepsilon$. So $A$ has orthogonal columns. The minimal nonzero singular value is $\sigma_r=\|\a_i+\varepsilon\|$ for some $i$, i.e., $\sigma_r=\sqrt{s}\varepsilon=\sqrt{s/n}$. 

Using a similar argument to the above, we can show that $\alpha T = \widetilde{\Theta}(\sqrt{n/r})=\widetilde{\Theta}(\sqrt{s})$. Indeed, let $U$ be an $(\alpha,a,\epsilon)$ block-encoding of $A$ that is constructed in time $O(T)$. We consider the sum of the first $r$ columns of $U$, i.e., we apply $U$ to $\ket{0} \otimes \frac{1}{\sqrt{r}}\sum_{j\in [r]} \ket{j}$ to obtain
\[
\frac{1}{\alpha \sqrt{r}} \ket{0} \left( \varepsilon \ket{0} +
\sum_{i=1}^n (a_i+\varepsilon) \ket{i}
\right) + \ket{1} \ket{G}.
\]
The probability of seeing $i$ such that $a_i=1$ (if exists) is $\Theta(1/\alpha^2 r)$. So $\sqrt{r}\alpha T =\Omega(\sqrt{n})$, i.e., $\alpha T = \Omega(\sqrt{n/r}) = \Omega(\sqrt{s})$. Now $A$ is $\Theta(s)$ sparse, so by
\cite[Lemma 48]{gilyen2019quantum}, we can construct a $(\sqrt{s},3+\log n, \epsilon)$ block-encoding of $A$ in time $T=O({\rm poly} \log n)$. Thus, $\alpha T = \widetilde{\Theta}(\sqrt{s})$.

Suppose $i$ exists (for notational convenience, assume $a_1=1$), then the leverage scores are
\[
\frac{1+\varepsilon}{\sqrt{s\varepsilon^2 + (1+\varepsilon)^2}} \approx 1, 
\,\,
\frac{\varepsilon}{\sqrt{s\varepsilon^2 + (1+\varepsilon)^2}} \approx \varepsilon = \frac{1}{\sqrt{n}},
\,\,
\frac{1}{\sqrt{s}}.
\]
If no such $i$, then the leverage scores are
\[
\frac{1}{\sqrt{s+1}},
\,\,
\frac{1}{\sqrt{s}}.
\]
Hence, if we can estimate the largest leverage score
up to a small constant additive/relative error, then we can determine if there is an $i$ such that $a_i=1$. From this, we can solve the unstructured search problem. Thus, estimating the largest leverage score costs at least $\widetilde{\Omega}(\alpha T/\sigma_r)$.
In this construction, we have $\alpha T/\sigma_r = \widetilde{\Theta}(\sqrt{n})$.

(2). We will use the second construction given above to prove this claim. In that construction, the matrix has rank $r$. Now we assume that $\a$ has Hamming weight $r$, i.e., we assume that there are $r$ marked items. By Grover's algorithm, to find one marked item, $\Omega(\sqrt{n/r})$ queries are required. Without loss of generality, we assume that $1\leq r\ll n$. We below show that we can use the binary search method to find one marked item if we can approximate the largest leverage score. We decompose $\a$ into two equal but disjoint parts $\a', \a''$. For $\a'=(\a'_1,\ldots,\a'_r)$, we construct a matrix in the same as that of $\a$. We denote it as $A'$. If $\a'$ contains marked items, then the largest leverage score of $A'$ is
\[
\frac{1+\varepsilon}{\sqrt{s'\varepsilon^2 + r'(1+\varepsilon)^2}} \approx \frac{1}{\sqrt{r'}},
\]
where $s'=n/2r$ and $1\leq r'\leq r$. Here $r'$ stands for the number of marked items in $\a_1'$. If there are no marked items in $\a'$, then the largest leverage score is $1/\sqrt{s'} = \sqrt{2r/n}$. If we can approximate the largest leverage score up to a small constant relative error, then we can distinguish which is the case.
If $\a'$ has marked items, we continue the bipartition; otherwise, we do the bipartition for $\a''$.   Note that if we repeat $l$ times, the largest leverage score is either $1/\sqrt{r'}$ or $\sqrt{2^l r/n}$. So after $l=\log (n/r^2)$ repetitions, these two leverage scores are hard to separate. However, in this case, there are $n/2^l = r^2$ items left. Now we can just query all the items because $r$ is small. By assumption, the success probability of $i$-th step is $1-2^i/n$. So the success probability of the above procedure is $\prod_{i=1}^l (1-2^i/n) \geq (1-1/r^2)^l = \Theta(1)$ when $r^2 \geq l = \log (n/r^2)$. Hence, the lower bound is $\Omega(\sqrt{n/r}-r^2) = \Omega(\sqrt{n/r}) $ when $n\geq r^5$.
\end{proof}

Finally, we show the hardness of solving linear regressions on a quantum computer.

\begin{prop}
\label{prop3:lower bounds}
For the linear regression problem $\arg\min\|A\x-\b\|$ with $A\in \mathbb{R}^{n\times d}, \b \in \mathbb{R}^n$,

\begin{itemize}
\item any quantum algorithm that returns $\tilde{\x}_{\rm opt}$ such that $\|\tilde{\x}_{\rm opt} - A^+\b\| \leq \varepsilon \|A^+\b\|$
needs to at least make 
$\Omega(\sqrt{n}+d)$ queries to $A$ or
$\Omega(\sqrt{n}+d)$ queries to $\b$. 
\item any quantum algorithm that returns $\tilde{\x}_{\rm opt}$ and $\|A\tilde{\x}_{\rm opt}-\b\|$ such that $\|A\tilde{\x}_{\rm opt}-\b\| \leq (1+\varepsilon) \min_{\x} \|A\x-\b\|$ needs to at least make $\Omega(\sqrt{n}+d)$ queries to $\b$. 
\end{itemize}

\end{prop}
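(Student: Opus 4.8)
Writing $q_A$ and $q_\b$ for the numbers of queries the algorithm makes to $A$ and to $\b$, the plan is to establish each of the two statements by exhibiting a single explicit pair $(A,\b)$ that simultaneously hides two independent black‑box problems: an unstructured search (OR) instance over $\Theta(n)$ bits, which accounts for the $\Omega(\sqrt n)$ term via the optimality of Grover's algorithm \cite{boyer1998tight}, and a ``read out an $\Omega(d)$‑bit string'' instance, which accounts for the $\Omega(d)$ term via the standard (folklore) counting bound that, under the entry oracle (\ref{oracle}), recovering an $N$‑bit string up to a $\delta$‑fraction of errors, for a fixed constant $\delta<1/2$, costs $\Omega(N)$ queries (for exact recovery this is just the $\mathrm{PARITY}$ lower bound). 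One then checks that the approximation guarantee forces the returned data to determine both hidden instances and reads off the query bound. We may assume $n\ge 2d$; otherwise $\sqrt n+d=\Theta(d)$ and only the string half is needed.

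For the first statement I would take $A\in\R^{n\times d}$ whose first $d-1$ columns are $\e_1,\dots,\e_{d-1}$ and whose last column is a vector $\a$ supported on coordinates $\{d,\dots,n\}$, entrywise in $\{0,1\}$, of Hamming weight at most $1$; and let $\b$ be the concatenation of a hidden string $\b'\in\{1,2\}^{d-1}$ with $n-d+1$ copies of $\gamma:=5\varepsilon\sqrt d$. Since $A$ has orthogonal columns, a one‑line computation gives $(A^+\b)_j=\b'_j$ for $j<d$, while $(A^+\b)_d$ equals $\gamma$ if $\a$ has a marked element and equals $0$ otherwise; also $\|A^+\b\|=\Theta(\sqrt d)$. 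Hence the promise $\|\tilde{\x}_{\rm opt}-A^+\b\|\le\varepsilon\|A^+\b\|$ gives an $\ell_2$ error budget of size $O(\varepsilon\sqrt d)$, from which, for $\varepsilon$ a small enough constant, two things follow: at most $O(\varepsilon^2 d)<d/4$ of the first $d-1$ coordinates of $\tilde{\x}_{\rm opt}$ can be off by $\ge \tfrac12$, so rounding them recovers all but at most $d/4$ of the bits of $\b'$ — which appears only in $\b$ — while $\gamma$ exceeds twice the entire budget, so the sign of $(\tilde{\x}_{\rm opt})_d-\gamma/2$ decides whether $\a$ has a marked element — and $\a$ appears only in $A$. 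By the two black‑box bounds this forces $q_\b=\Omega(d)$ and $q_A=\Omega(\sqrt n)$ on this instance, whence $\max(q_A,q_\b)\ge\tfrac12(q_A+q_\b)=\Omega(\sqrt n+d)$, which is exactly ``$\Omega(\sqrt n+d)$ queries to $A$ or $\Omega(\sqrt n+d)$ to $\b$''.

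For the second statement the returned vector can be uninformative on its own — when $A$ is rank‑deficient every $\x$ is optimal — so the role of the additionally returned residual $\|A\tilde{\x}_{\rm opt}-\b\|$ becomes essential. I would take $A$ to be $I_d$ sitting on top of $n-d$ zero rows, and $\b$ the concatenation of a hidden string $\b'\in\{0,1\}^{d}$ with $\b''\in\{0,1\}^{n-d}$ of Hamming weight at most $1$. Then $\min_{\x}\|A\x-\b\|=\|\b''\|\in\{0,1\}$, attained at $\x=\b'$, and $\|A\x-\b\|^2=\|\x-\b'\|^2+\|\b''\|^2$ for every $\x$; so the promise $\|A\tilde{\x}_{\rm opt}-\b\|\le(1+\varepsilon)\|\b''\|$ forces $\|\tilde{\x}_{\rm opt}-\b'\|^2\le(2\varepsilon+\varepsilon^2)\|\b''\|^2$. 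If $\|\b''\|=0$ this pins $\tilde{\x}_{\rm opt}=\b'$ exactly; if $\|\b''\|=1$ then for $\varepsilon$ a small enough constant $\|\tilde{\x}_{\rm opt}-\b'\|_\infty<\tfrac12$, so rounding recovers $\b'$ exactly. Either way $\b'$ — which lives only in $\b$ — is recovered, forcing $q_\b=\Omega(d)$. Moreover the returned residual equals $0$ when $\|\b''\|=0$ and is $\ge 1$ otherwise, hence decides the OR of $\b''$, which also lives only in $\b$, forcing $q_\b=\Omega(\sqrt{n-d})=\Omega(\sqrt n)$. Together these give $q_\b=\Omega(\sqrt n+d)$.

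The conceptual content is confined to the two invoked black‑box lower bounds — Grover optimality for OR and the cost of reading a string out of the entry oracle — and everything else is elementary linear algebra with the Moore--Penrose inverse and the least‑squares objective. The one place that needs genuine care is the first statement: because the $\ell_2$ error budget $\varepsilon\|A^+\b\|$ is dominated by the string block, one must scale the embedded search amplitude $\gamma=\Theta(\varepsilon\sqrt d)$ large enough that the whole budget cannot be dumped onto the single ``search'' coordinate, yet small enough not to disturb the norm estimates used for the string read‑out. I expect keeping track of that trade‑off, rather than any conceptual point, to be the main thing to get right.
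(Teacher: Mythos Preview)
Your argument is correct and takes a different route from the paper's. The paper dismisses the $\Omega(d)$ term in one line (``obvious since $\tilde{\x}_{\rm opt}$ has $d$ entries'') and proves only the $\Omega(\sqrt n)$ term, via two separate minimal constructions: for the first bullet it hides a search instance $\a\in\{0,1\}^n$ as the second column of an $(n{+}1)\times 2$ matrix $A$ with $\b$ the all-ones vector, so that $A^+\b$ is either $(1,0)^T$ or $(1,1)^T$; for the second bullet it takes $A$ to be a single fixed column and embeds the search instance in $\b$, so that both $x_{\rm opt}\in\{1,2\}$ and the optimal residual ($0$ versus $\sqrt{n-1}$) reveal the OR. The $\sqrt n$ and $d$ bounds are then combined implicitly. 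Your approach instead packs an $\Omega(d)$-bit string-recovery problem and an OR instance into a \emph{single} $(A,\b)$ per bullet, balancing them with the parameter $\gamma$; this costs you an additional black-box ingredient (recovering most of a $d$-bit string from the entry oracle needs $\Omega(d)$ queries), but in return turns the $d$-dependence into an honest query lower bound and delivers both terms on one hard family, making the conclusion $\max(q_A,q_\b)=\Omega(\sqrt n+d)$ explicit. The paper's proof is shorter and invokes only the OR lower bound; yours is more self-contained on the $d$-side and more careful about what ``$\Omega(\sqrt n+d)$ queries to $A$ or to $\b$'' actually means.
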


\begin{proof}
The dependence on $d$ is obvious since $\tilde{\x}_{\rm opt}$ has $d$ entries. Below we focus on the dependence on $n$, which is usually much larger than $d$ for linear regressions.
We consider the existence problem. Given an oracle to query $\a=(a_1,\ldots,a_n) \in \{0,1\}^n$, determine if there is an $i$ such that $a_i=1$. This is equivalent to computing the OR function \cite{beals2001quantum}, so any quantum algorithm requires making $\Omega(\sqrt{n})$ queries to $\a$. Now let $\b\in \mathbb{R}^{n+1}$ be the all one vector and 
\[
A = \begin{pmatrix}
1 & 0 \\
0 & \a^T
\end{pmatrix}_{(n+1)\times 2}.
\]
If no such $i$, then $A^+ \b = \begin{pmatrix}
1 \\
0 
\end{pmatrix}$.
Otherwise, $A^+ \b = \begin{pmatrix}
1 \\
1 
\end{pmatrix}.$
These two states are clearly separated. So from the output, we can determine if there is an $i$ such that $a_i=1$. 
Thus outputting $\tilde{\x}_{\rm opt}$ with $\|\tilde{\x}_{\rm opt} - A^+\b\| \leq \varepsilon \|A^+\b\|$
requires at least making $\Omega(\sqrt{n})$ queries to $A$. 

We next consider the following construction. Let
\beas
A = \frac{1}{\sqrt{n}} \sum_{i\in [n]} \ket{i},
\quad 
\b = \frac{1}{\sqrt{n}} \sum_{i\in [n]} \ket{i} + \sqrt{n} \sum_{i\in[n]} a_i \ket{i}.
\eeas
If there is no $i$ such that $a_i=1$, then the optimal solution is $x_{\rm opt} = 1$ and the optimal value is $Z = 0$. If there is one $i$ such that $a_i=1$, then $x_{\rm opt} = 2$ and the optimal value is $Z = \sqrt{n-1}$. Thus, if we can compute $\tilde{x}_{\rm opt}$ such that $|\tilde{x}_{\rm opt} - x_{\rm opt}| \leq \varepsilon |x_{\rm opt}|$, then we can also solve the unstructured search problem. Moreover, if the quantum algorithm is required to output an approximation of the optimal value, then this algorithm must query $\Omega(\sqrt{n})$ entries of $\b$.
\end{proof}

Combining Theorem \ref{maim theorem 1} and Proposition \ref{prop3:lower bounds}, for solving linear regressions, it is possible that quantum computers have credible quadratic speedups in terms of $n$ (i.e., the number of constraints), which is usually much larger than $d$ (i.e., the dimension) for linear regressions. 

Recall that if we query $A$ through an $(\alpha,a,\epsilon)$ block-encoding, then the cost of solving linear regressions with quantum state output is $\widetilde{O}(\alpha \kappa T)$ \cite{chakraborty2019power}. It is known that the dependence on $\kappa$ is believed to be optimal \cite{harrow2009quantum}. In the first construction of the above proof, we indeed showed that the dependence on $\alpha$ is also optimal. To see this, in the first construction, it is easy to see that the condition number of the matrix is 1 and the two solution states can be used to determine if there is a marked item. In addition, we can  construct a $(\sqrt{n},3+\log n, \epsilon)$ block-encoding in time polylog in $n$ by \cite[Lemma 48]{gilyen2019quantum}. Although the matrix is sparse now, we do not know the nonzero positions. So the result \cite[Lemma 48]{gilyen2019quantum} should be used by viewing $A$ as a dense matrix. Thus $\alpha=\sqrt{n}$. This means obtaining the quantum state of the solution requires making $\Omega(\alpha)$ queries to $A$. In the construction, the angle of the two solution states is $\pi/4$, so the allowable error to separate them is the distance between $\ket{0}$ and $\cos(\pi/8) \ket{0} + \sin(\pi/8) \ket{1}$, which is about 0.39. In summary, we conclude above as the following result.

\begin{prop}
\label{prop4:lower bounds}
For the linear regression problem $\arg\min\|A\x-\b\|$ with $A\in \mathbb{R}^{n\times d}, \b \in \mathbb{R}^n$, if we query $A$ through an $(\alpha,a,\epsilon)$ block-encoding, then 
$\Omega(\alpha)$ applications of this block-encoding are required to prepare $\ket{A^+\b}$ up to any error $\delta \le 0.39$.

\end{prop}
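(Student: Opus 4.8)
The plan is to reuse the first hard instance from the proof of Proposition~\ref{prop3:lower bounds}, namely the $(n+1)\times 2$ matrix
\[
A = \begin{pmatrix} 1 & 0 \\ 0 & \a^T \end{pmatrix}, \qquad \b = \sum_{i=0}^{n} \ket{i},
\]
where $\a=(a_1,\ldots,a_n)\in\{0,1\}^n$ is promised to contain at most one marked index, and to observe that distinguishing the ``no marked index'' case from the ``one marked index'' case via $\ket{A^+\b}$ is exactly the $\mathrm{OR}$ problem, which costs $\Omega(\sqrt n)$ queries to $\a$. First I would record that in this instance $A$ has condition number $1$ (indeed $A^TA = \diag(1,\|\a\|^2)$ with $\|\a\|^2\in\{0,1\}$; if $\|\a\|^2=0$ the relevant singular value is just dropped), so the lower bound cannot be attributed to ill-conditioning. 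Next I would compute the two possible solution states: when there is no marked index $A^+\b \propto \ket{0}$, hence $\ket{A^+\b}=\ket{0}$; when there is one marked index $A^+\b \propto (1,1)^T$, hence $\ket{A^+\b} = \tfrac{1}{\sqrt2}(\ket{0}+\ket{1})$. The angle between these two unit vectors is $\pi/4$, so their Euclidean distance is $\sqrt{2-\sqrt2}\approx 0.765$; any algorithm producing $\tilde\x_{\mathrm{opt}}/\|\tilde\x_{\mathrm{opt}}\|$ within distance $\delta \le 0.39 < 0.765/2$ of the true $\ket{A^+\b}$ lets us decide which case holds, and therefore solves $\mathrm{OR}$.

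The remaining step is to convert the $\Omega(\sqrt n)$ query lower bound on $\a$ into an $\Omega(\alpha)$ lower bound in the block-encoding model. Here I would invoke \cite[Lemma 48]{gilyen2019quantum}: since $A$ is (effectively) sparse but we do not know the support of $\a$, we treat it as a dense $(n+1)\times 2$ matrix and build a $(\sqrt{n+1},\,O(\log n),\,\epsilon)$ block-encoding using $O(\polylog n)$ queries to the oracle~(\ref{oracle}) for $A$, which we can simulate from the oracle for $\a$. Thus a single application of this block-encoding costs only $\widetilde O(1)$ oracle queries, so if preparing $\ket{A^+\b}$ (to error $\le 0.39$) required only $o(\alpha)=o(\sqrt n)$ applications of the block-encoding, we would solve $\mathrm{OR}$ on $n$ bits with $\widetilde o(\sqrt n)$ queries, contradicting the optimality of Grover's algorithm \cite{boyer1998tight,beals2001quantum}. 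Normalizing $\alpha = \sqrt{n+1}$ in this construction, this yields the claimed $\Omega(\alpha)$ lower bound.

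The one point that needs a little care is the error threshold: Proposition~\ref{prop4:lower bounds} is phrased in terms of ``error $\delta\le 0.39$'' for the \emph{state} $\ket{A^+\b}$, whereas the algorithm in the setting of the proposition is naturally thought of as outputting a state proportional to the solution. I would make explicit that the relevant quantity is the trace/Euclidean distance between normalized output and $\ket{A^+\b}$, and that the gap $0.765$ between the two target states (equivalently, the distance $\approx 0.39$ between $\ket{0}$ and $\cos(\pi/8)\ket{0}+\sin(\pi/8)\ket{1}$, which is the ``halfway'' discrimination bound already noted in the paragraph preceding the proposition) comfortably exceeds $2\cdot 0.39$ only on one side; in fact the cleanest statement is that any $\delta$ strictly below half the inter-state distance works, and $0.39$ is chosen to match the halfway point mentioned in the text. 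I expect this bookkeeping of constants to be the only mildly fiddly part; the structural argument — reuse the condition-number-$1$ $\mathrm{OR}$ instance, note $\alpha=\sqrt n$ via \cite[Lemma 48]{gilyen2019quantum}, and transfer the Grover lower bound — is otherwise immediate from what has already been proved.
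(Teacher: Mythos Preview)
Your proposal is correct and follows exactly the paper's own argument (given in the paragraph immediately preceding the proposition): reuse the $\mathrm{OR}$ instance from Proposition~\ref{prop3:lower bounds}, note that the condition number is $1$, build an $(\alpha,a,\epsilon)$ block-encoding with $\alpha=\Theta(\sqrt n)$ via \cite[Lemma 48]{gilyen2019quantum} by treating $A$ as dense, and transfer the $\Omega(\sqrt n)$ Grover lower bound to an $\Omega(\alpha)$ bound on block-encoding applications. One small slip to fix: $0.765/2\approx 0.383<0.39$, so your inequality ``$0.39<0.765/2$'' is backwards; as you correctly say afterwards, the constant $0.39$ is the distance from $\ket{0}$ to the halfway state $\cos(\pi/8)\ket{0}+\sin(\pi/8)\ket{1}$, which is the right threshold when the output is itself a unit vector.
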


Note that in the second construction of the proof of Proposition \ref{prop3:lower bounds}, we have $\alpha=1$ while the lower bound is still $\Omega(\sqrt{n})$. This means that to solve linear regressions efficiently on a quantum computer, we need to take both $A$ and $\b$ into account. The results of \cite{boutsidis2013near} might be helpful in improving quantum algorithms.

\section{Conclusions}

In this work, we proposed a quantum algorithm for accelerating leverage score sampling. Due to the wide applications of leverage score sampling, it is possible that Theorem \ref{thm:intro 4} can be used to speed up other applications in randomized numerical linear algebra. 
It is also interesting to know if we can propose other quantum algorithms not using QSVT for leverage score sampling, hoping to lead to more convincing quantum speedups, e.g., speed up the algorithm \cite{drineas2006fast3}.
For rigid regressions, the main quantum techniques (i.e., block-encoding and QSVT) used in this paper are very general, so we may not expect clean results like \cite{apers2020quantum,chen2021quantum}.
To explore more credible quantum speedups, one question is can we remove $T\alpha/\lambda$ in Theorem \ref{intro-thm1}? From the lower bounds analysis in Section \ref{sec:Lower bound analysis}, some new ideas are required. Also based on the lower bound analysis in Proposition \ref{prop3:lower bounds}, it is interesting to know if we can propose a quantum algorithm that can achieve credible quadratic speedups with respect to $n$.

\subsection*{Acknowledgements}

I would like to thank Ashley Montanaro, Yuji Nakatsukasa for their valuable discussions. I also would like to thank the anonymous referees of TQC2023 and QCTIP 2023 for the useful comments.
I acknowledge support from EPSRC grant EP/T001062/1. This project has received funding from the European Research Council (ERC) under the European Union's Horizon 2020 research and innovation programme (grant agreement No.\ 817581). No new data were created during this study.

\bibliographystyle{plain}
\bibliography{main}

\end{document}